\documentclass[acmsmall]{acmart}\settopmatter{}

\usepackage{amsmath}
\usepackage{amsfonts}
\usepackage{amsthm}
\usepackage{quiver}
\usepackage{thmtools}
\usepackage{mathtools}
\usepackage{mathpartir}
\usepackage{stmaryrd}
\usepackage{xcolor}
\usepackage{bm}
\usepackage{natbib}
\usepackage{multicol}
\usepackage{tikz-cd}

\usepackage{booktabs}   
\usepackage{subcaption} 

\usepackage{hyperref}
\usepackage[capitalise, noabbrev]{cleveref}

\declaretheorem[name=Theorem, parent=section]{theorem}

\declaretheorem[name=Definition, parent=section, style=definition]{definition}

\captionsetup{aboveskip=0pt,belowskip=0pt}
\allowdisplaybreaks

\newif\ifdraft

\newif\ifextended
\extendedtrue

\newcommand{\cbpv}{$\textrm{CBPV}_\omega^{\forall,\exists,\nu}$}
\newcommand{\code}[1]{\ensuremath{\texttt{#1}}} 
\newcommand{\name}[1]{\textsf{#1}}
\newcommand{\keyword}[1]{\textsf{#1}}


\newcommand{\brac}[1]{\ensuremath{\llbracket~#1~\rrbracket}}

\newcommand{\floor}[1]{\ensuremath{\lfloor~#1~\rfloor}}
\newcommand{\subst}[3]{\ensuremath{#1[#2 / #3]}}
\newcommand{\multi}[1]{\ensuremath{\{#1\}}}


\DeclareMathOperator{\Set}{\mathnormal{Set}}
\DeclareMathOperator{\Alg}{\mathnormal{Alg}}

\newcommand{\judgment}[3]{\inferrule{#1}{#2}~~{\textsc{\small [#3]}}}
\newcommand{\judgbox}[1]{\noindent \fbox{$#1$}}

\newcommand{\lam}{\ensuremath{\lambda}}
\newcommand{\with}{\&}

\newcommand{\defeq}{\triangleq}
\newcommand{\id}{\textrm{id}}

\newcommand{\el}{\textrm{el}}


\newcommand{\VTy}{\ensuremath{\textsf{VTy}}}
\newcommand{\CTy}{\ensuremath{\textsf{CTy}}}

\newcommand{\SLam}[2]{\ensuremath{\lam~#1.~#2}}
\newcommand{\AInt}{\ensuremath{\code{Int}}}
\newcommand{\AString}{\ensuremath{\code{String}}}
\newcommand{\AThk}{\ensuremath{\code{Thk}}}
\newcommand{\AThunk}[1]{\ensuremath{\AThk~#1}}
\newcommand{\AUnit}{\ensuremath{\code{Unit}}}
\newcommand{\ABool}{\ensuremath{\code{Bool}}}
\newcommand{\AProd}[2]{\ensuremath{#1 \times #2}}
\newcommand{\ACoProd}[2]{\ensuremath{+\,\{~#1~\}_{#2}}}
\newcommand{\ACoProdLit}[1]{\ensuremath{+\,\{~#1~\}}}
\newcommand{\AExists}[2]{\ensuremath{\exists~#1.~#2}}
\newcommand{\BOS}{\ensuremath{\code{OS}}}
\newcommand{\BRet}{\ensuremath{\code{Ret}}}
\newcommand{\BReturn}[1]{\ensuremath{\BRet~#1}}
\newcommand{\BArrow}[2]{\ensuremath{#1 \to #2}}
\newcommand{\BTop}{\ensuremath{\&_\varnothing}}
\newcommand{\BWith}[2]{\ensuremath{\&\{~#1~\}_{#2}}}
\newcommand{\BWithLit}[1]{\ensuremath{\&\{~#1~\}}}
\newcommand{\BNu}[2]{\ensuremath{\nu~#1.~#2}}
\newcommand{\BForall}[2]{\ensuremath{\forall~#1.~#2}}
\newcommand{\BMo}{\ensuremath{\code{RelMonad}}}
\newcommand{\BMonad}[1]{\ensuremath{\BMo~#1}}
\newcommand{\BAlg}{\ensuremath{\code{Algebra}}}
\newcommand{\BAlgebra}[2]{\ensuremath{\BAlg~#1~#2}}
\newcommand{\BMoTrans}{\ensuremath{\code{RelMonadTrans}}}

\newcommand{\Inj}[1]{\ensuremath{\keyword{inj}_{#1}}}

\newcommand{\VThunk}[1]{\ensuremath{\{#1\}}}
\newcommand{\VUnit}{\ensuremath{()}}
\newcommand{\VPair}[2]{\ensuremath{(#1, #2)}}
\newcommand{\VPack}[2]{\ensuremath{(#1, #2)}}
\newcommand{\VInj}[2]{\ensuremath{\Inj{#1}(#2)}}
\newcommand{\VHalt}{\ensuremath{\code{halt}}}

\newcommand{\MReturn}[1]{\ensuremath{\keyword{ret}~#1}}
\newcommand{\MTmLam}[2]{\ensuremath{\lambda~#1.~#2}}
\newcommand{\MTyLam}[2]{\ensuremath{\Lambda~#1.~#2}}
\newcommand{\MBind}[3]{\ensuremath{\keyword{do}~#1 \gets #2~\keyword{;}~#3}}
\newcommand{\MForce}[1]{\ensuremath{~!~#1}}

\newcommand{\MLet}[3]{\ensuremath{\keyword{let}~ #1 = #2 ~\keyword{in}~ #3}}

\newcommand{\MLetPair}[4]{\ensuremath{\keyword{let}~\VPair{#1}{#2} = #3 ~\keyword{in}~ #4}}
\newcommand{\MMatch}[4]{\ensuremath{\keyword{match}~#1~\{~#2 \Rightarrow #3~\}_{#4}}}
\newcommand{\MLetPack}[4]{\ensuremath{\keyword{let}~\VPack{#1}{#2} = #3 ~\keyword{in}~ #4}}
\newcommand{\MTop}{\ensuremath{\keyword{comatch}~|}}
\newcommand{\MComatch}[3]{\ensuremath{\keyword{comatch}~\{~#1 \Rightarrow #2~\}_{#3}}}
\newcommand{\MRoll}[1]{\ensuremath{\keyword{roll}(#1)}}
\newcommand{\MUnroll}[1]{\ensuremath{\keyword{unroll}(#1)}}
\newcommand{\MFix}[2]{\ensuremath{\keyword{fix}~#1.~#2}}
\newcommand{\MMonadic}[1]{\ensuremath{\mathcal{M}(#1)}}

\newcommand{\Implicit}[1]{\ensuremath{\{#1\}}}

\newcommand{\CtxtEmpty}{\ensuremath{~\cdot~}}
\newcommand{\CtxtExtend}[3]{\ensuremath{#1,~#2:#3}}

\newcommand{\Stk}{\ensuremath{\mathcal{K}}}
\newcommand{\SAmbient}{\ensuremath{~\bullet~}}
\newcommand{\SKont}[3]{\ensuremath{\textsf{Kont}(#1~.~#2) :: #3}}
\newcommand{\SApp}[2]{\ensuremath{\textsf{App}(#1) :: #2}}
\newcommand{\STyApp}[2]{\ensuremath{\textsf{TyApp}(#1) :: #2}}
\newcommand{\SDtor}[2]{\ensuremath{\textsf{Dtor}(#1) :: #2}}
\newcommand{\SUnroll}[1]{\ensuremath{\textsf{Unroll} :: #1}}


\newcommand{\SigD}[1]{\ensuremath{\textrm{Sig}(#1)}}
\newcommand{\SigP}[2]{\ensuremath{\textrm{Sig}_{\,#1}(#2)}}
\newcommand{\StrP}[1]{\ensuremath{\textrm{Str}(#1)}}

\newcommand{\flCar}[1]{\ensuremath{\floor{#1}}}



\newcommand{\AMV}{\ensuremath{A}}
\newcommand{\BMV}{\ensuremath{B}}


\newcommand{\ctxAssignKind}[3]{\ensuremath{#1 \vdash #2 : #3}}
\newcommand{\ctxAssignType}[3]{\ensuremath{#1 \vdash #2 : #3}}

\acmJournal{PACMPL}
\acmVolume{9}
\acmNumber{OOPSLA1} 
\acmArticle{100}
\acmYear{2025}
\acmMonth{4}
\acmDOI{10.1145/3720434} 
\startPage{1}

\ifextended
\setcopyright{none}
\else
\setcopyright{cc}
\setcctype{by}
\fi

\bibliographystyle{ACM-Reference-Format}
\citestyle{acmauthoryear}   


\begin{document}

\title{Notions of Stack-Manipulating Computation and Relative~Monads}
\ifextended
\subtitle{Extended Version}
\subtitlenote{This is an extended version of a paper to appear at OOPSLA 2025, containing appendices with additional technical details and examples.}
\fi



\author{Yuchen Jiang}
\orcid{0000-0002-1274-1450}             
\affiliation{
  \department{Computer Science and Engineering}              
  \institution{University of Michigan}            
  \city{Ann Arbor}
  \state{MI}
  \country{USA}                    
}
\email{lighght@umich.edu}          

\author{Runze Xue}
\orcid{0000-0003-1274-0922}             
\affiliation{
  \department{Computer Science and Engineering}              
  \institution{University of Michigan}            
  \city{Ann Arbor}
  \state{MI}
  \country{USA}                    
}
\affiliation{
  \department{Department of Computer Science and Technology}              
  \institution{University of Cambridge}            
  \city{Cambridge}
  \state{Cambridgeshire}
  \country{UK}                    
}
\email{rx245@cam.ac.uk}          

\author{Max S. New}
\orcid{0000-0001-8141-195X}
\affiliation{
  \position{Assistant Professor}
  \department{Computer Science and Engineering}              
  \institution{University of Michigan}            
  \city{Ann Arbor}
  \state{MI}
  \country{USA}                    
}
\email{maxsnew@umich.edu}          

\begin{abstract}
  Monads provide a simple and concise interface to user-defined
  computational effects in functional programming languages. This
  enables equational reasoning about effects, abstraction over monadic
  interfaces and the development of monad transformer stacks to
  allow for multiple effects. Compiler implementors and assembly code
  programmers similarly virtualize effects, and would benefit from
  similar abstractions if possible. However, the implementation
  details of effects seem disconnected from the high-level monad
  interface: at this lower level much of the design is in the layout
  of the runtime \emph{stack}, which is not accessible in a high-level
  programming language.

  We demonstrate that the monadic interface can be faithfully adapted
  from high-level functional programming to a lower level setting with
  explicit stack manipulation. We use a polymorphic call-by-push-value
  (CBPV) calculus as a setting that captures the essence of
  stack-manipulation, with a type system that allows programs to
  define domain-specific stack structures. Within this setting, we
  show that the existing category-theoretic notion of a \emph{relative
  monad} can be used to model the stack-based implementation of
  computational effects. To demonstrate generality, we adapt a variety
  of standard monads to relative monads. Additionally, we show that
  stack-manipulating programs can benefit from a generalization of
  do-notation we call ``monadic blocks'' that allow all CBPV code to
  be reinterpreted to work with an arbitrary relative monad. As an
  application, we show that all relative monads extend automatically
  to relative monad transformers, a process which is not automatic for
  monads in pure languages.
\end{abstract}

\begin{CCSXML}
<ccs2012>
<concept>
<concept_id>10003752.10010124.10010131.10010137</concept_id>
<concept_desc>Theory of computation~Categorical semantics</concept_desc>
<concept_significance>500</concept_significance>
</concept>
<concept>
<concept_id>10003752.10010124.10010125.10010127</concept_id>
<concept_desc>Theory of computation~Functional constructs</concept_desc>
<concept_significance>500</concept_significance>
</concept>
</ccs2012>
\end{CCSXML}

\ccsdesc[500]{Theory of computation~Categorical semantics}
\ccsdesc[500]{Theory of computation~Functional constructs}

\keywords{relative monad, call-by-push-value, stack machine}  

\maketitle

\section{Introduction}
\label{sec:introduction}

Since Moggi's seminal work \cite{moggi91}, monads have become a wildly
successful tool in both semantics of programming languages and
functional programming. Moggi showed that denotational models of
call-by-value programming languages with effects can be easily
constructed using the Kleisli category of a monad. Later, Wadler and
others popularized this construction as a programming
construct\cite{wadlermonads, haskelliomonad}. Haskell programmers
design custom monads that capture the effects in their application,
and then use do-notation as an embedded language for call-by-value
programming within the ``pure'' host language Haskell.

Using monads, a programmer can implement virtualized effects: effects
that are not built into the language implementation but instead
simulated via an encoding using host language features. For instance,
even if the host language does not include a mechanism for exceptions,
programmers can write functions that use an output type of
\,\code{Either} \code{e} \code{a}\, representing an effectful
computation that may either ``raise'' an exception of type \code{e} or
return a value of type \code{a}. If a programmer extensively uses the
monadic bind operation, then errors propagate just as if language
supported exceptions directly. Monadic programming with \code{Either}
can then be viewed as programming with an embedded language that does
have support for exceptions. Implementing the language means
implementing the \code{Either} \code{e} \code{a} type, the monadic
operations, as well as the implementation of raising errors.

Compiler writers also implement virtualized effects: exceptions and
other continuation-based effects are typically implemented in software
rather than directly using a hardware-level mechanism. How do these
low-level implementation techniques compare to the high-level version
of exceptions? At first they do not seem so similar. When a compiler
writer implements a language with exceptions, they may reach for one
of several well-known techniques. One is to use \emph{double-barreled}
continuations (folklore but formalized by \citet{double-barrel}),
where instead of storing a single continuation in the form of a return
address on the stack, two continuations are passed: one for success
and one for failure. Another is to use \emph{stack-walking}
techniques: markers are placed on the stack to indicate the presence
of particular exception \emph{handlers} and raising an exception
involves searching the stack dynamically for the closest relevant
handler. In either case, the compiler writer designs the layout of the
stack, where continuations or handlers are stored, and how returns and
raises are implemented with these stack structures.

It appears hopeless to try to fit these low-level implementation
techniques into a monadic interface: fundamentally monads are an
abstraction of \emph{higher-order} effectful programming: a value of
type \code{m} \code{a} for a monad \code{m} is a
\emph{first-class} value representing a computation that performs
effects and produces \code{a} values. This seems mostly orthogonal
to the design decisions above, effectful computations are made
first-class by using \emph{closures}, but closures never need to
explicitly come up when describing effectful computations, and
first-order programming languages can support effects without
implementing closures at all.

On the other hand, there are similarities between monadic programming
and low-level effect implementations. The monadic interface is given
by a type structure, and similarly the low-level interface needs to
describe the appropriate invariants on the machine state, e.g., where
continuations or handlers are placed on the stack. Just like the
monadic \code{return}, the low-level implementation must be able to
return a final result. Analogous to the monadic bind, the low-level
implementation must be able to extend the current stack with further
continuations in order to implement function calls.

In this work we seek to resolve the disconnect between high-level and
low-level views of effects. We develop a suitable variation of
the monadic interface that on the one hand has the modularity and
abstraction benefits of high-level monadic programming, while applying
directly to the more low-level view of effects in compiler
implementations. In particular, the low-level aspect we focus on is
the design of layouts for continuations on the runtime
\emph{stack}. To study this problem in the abstract, we swap out
Haskell as our implementation language for a polymorphic
call-by-push-value (CBPV) calculus \cite{levy01:phd,levy-stack-2005}. Just as $\lambda$-calculus is the
``standard model'' for programming with functions, we view CBPV as the
``standard'' model for programming with the stack. This means that
low-level concepts like stack frames, stack walking, argument and
continuation passing are represented explicitly as emergent
programming constructs within the CBPV calculus, rather than being
built-in notions. Most importantly, CBPV features a \emph{type system}
that allows us to describe stack structures in the same way that a
Haskell-like type system allows us to define algebraic data types.

In more detail, CBPV has two kinds of types. Value types \code{A}
behave like types in a call-by-value language, in that they classify
first-class data and include algebraic data types. Computation types
\code{B} on the other hand, classify computations that interact with
the runtime stack of the abstract machine. In particular, we can view
the type \code{B} as describing the possible states that the stack
can be in when the computation executes: how many arguments are on the
stack, if the stack stores a continuation, etc. This interpretation of CBPV is concretely realized with its stack-based
operational semantics, where type safety says that the possible
structures the stack in the abstract machine can take must conform to
the type of the current computation, just as values in the environment
must conform to the type of the variables.

We enrich CBPV with System $F_\omega$-style of polymorphism over types
and type constructors, allowing us to define Haskell-like abstractions
over value and computation types. We show that this expressive type
system can describe an analog of monads that applies directly to the
stack-based implementation strategies such as those for exceptions
described above. The key idea is that the correct type of the
``monad'' is not an endofunctor from value types to value types, but a
functor from value types to \emph{computation types}. Intuitively, our
monads take a type of value to the type of computations that can
perform the effects specified by the monad. Viewed in terms of stacks,
the monad defines the structure of \emph{continuations} implemented
using the stack that allows for not just returning but also some other
kind of side-effects. Then the analog of return and bind are not pure
functions on values, but instead stack-manipulating computations.
Our adapted notion of monad is not just some ad hoc definition, but
fits into a well-known generalization of monads: \emph{relative
monads}, which generalize monads so that they need not be
endofunctors\cite{acu-2010}. 

\paragraph{Overloading Computations using Monads}

One of the most important aspects of monads is their ability to
``overload the semicolon'' by instantiating a quite simple structure:
a type constructor, and two operations on it called return and bind,
satisfying some natural equations. This gives Haskell programmers
access to the convenience of do-notation, which can be viewed as a
shallow embedding of call-by-value programming into the pure
metalanguage.

A natural question we then turn to is what the analog of do-notation
is for relative monads in CBPV. We show that there is an even more
powerful version of the do-notation in that a relative monad allows
not just for embedding of call-by-value programming into CBPV, but in
fact all of CBPV programming! That is, the embedded language the
relative monad provides is the same as the metalanguage, meaning all
constructs in CBPV can be ``overloaded'' to work with an arbitrary
user-defined effect. This is because CBPV is already a calculus for
effectful programming, rather than pure programming. CBPV supports an
abstract computation type $\BReturn{A}$ that can be viewed as the type
of computations that can perform the ``built-in'' effects. Since this
type is abstract, all CBPV code that uses it can be freely
reinterpreted as code where $\BReturn{A}$ is replaced by an arbitrary
user-defined relative monad. This includes the stack-manipulating
computations themselves, where the interpretation of the stack now
replaces the abstract continuations for $\BReturn{A}$ with the
user-defined notion given by the relative monad.

To demonstrate this overloadability, we extend our CBPV calculus with
a new computation construct we call a \emph{monadic block}, which
re-interprets arbitrary CBPV computations as interacting with a
user-defined relative monad. This can be implemented by a
source-to-source transformation, where the abstract return and bind
provided by the $\BReturn{A}$ type are translated to the return of the
monad as well as the use of \emph{algebras} for the relative monad,
which are constructed in a type-directed manner. For this reason we
call this translation the \emph{algebra translation} as it interprets
computation types as algebras for the user-defined monad.

We apply this overloadability to study another aspect of programming
with monads: monad transformers. Monad transformers can be thought of
as a kind of type constructor for monads: they take a monad as input
and return a new monad which allows for additional effects to be
performed. This allows for monads to be constructed compositionally as
large ``monad transformer stacks''. For this reason, most monads in
Haskell are in fact defined in transformer form to enable this
compositionality. While some techniques are known for deriving a monad
transformer from a monad \cite{hinze-2000}, there is no
automatic mechanism for extending a monad to a monad transformer. Most
well-known is that the list monad only extends to a monad transformer
if the input monad is assumed to be commutative.

In studying relative monad transformers in CBPV we encounter a quite
different situation: because every monad is relative to the ambient
notion of effect encoded by $\BReturn{A}$, \emph{all} monads definable
in CBPV can be mechanically extended to monad transformers. The
mechanical process is in fact a special case of our monadic blocks: we
define the relative monad transformer from a relative monad $S$ by
taking in an input relative monad $T$ and constructing $S$ within a
monadic block for $T$.

\paragraph{Overview}

The remainder of the paper is structured as follows
\begin{itemize}
\item In \cref{sec:cbpv-intro}, we give an introduction to our
  polymorphic CBPV calculus and show examples of how computations
  should be viewed as manipulating the runtime stack. We then define
  the syntax typing and stack-based operational semantics in
  \cref{sec:zydeco-syntax-semantics}.
\item In \cref{sec:relative-monads-in-zydeco} we introduce relative
  monads as a programming abstraction in CBPV and demonstrate 
  several examples how we can abstract over uses of the stack to
  implement state and control effects.
\item In \cref{sec:mo-blocks-and-alg-trans} we extend CBPV with
  \emph{monadic blocks}, show how these can be implemented using
  \emph{algebras} of a relative monad, and apply monadic blocks to
  derive relative monad transformers automatically from relative
  monads.
\item In \cref{sec:fund-thm} we connect our CBPV programming
  constructs with their corresponding category-theoretic notions. We
  show that the implementation of monadic blocks corresponds to a
  theorem we call the \emph{fundamental theorem of CBPV relative
  monads}.
\item Finally, in \cref{sec:discussion} we discuss related and future work.
\end{itemize}

All examples in the paper have been implemented in a proof of concept
implementation of our calculus we call Zydeco which supports
polymorphic CBPV with monadic blocks \cite{zydeco}. Zydeco makes CBPV
programming more ergonomic by incorporating a bidirectional type
system as well as unifying most of the core primitive type
constructors under generalized \code{data} and \code{codata}
forms. Additionally it includes primitive types and functions to allow
for simple shell scripting examples.

\section{Programming with the Stack in Call-by-push-value}
\label{sec:cbpv-intro}

We begin with examples of programming in our calculus
\cbpv{} and how we interpret
programs as manipulating the stack.

\paragraph{Calling Conventions as Types}

CBPV is a low-level language in that there is a syntactic distinction
between \emph{values} and \emph{computations}. Levy's slogan is that
``a value is'' and ``a computation does''. Values are inert data that
are first class and can be stored in variables, returned to
continuations and passed as arguments on the stack. Computations on
the other hand are ``imperative'' in that they execute by manipulating
the current state of the stack.
Values and computations each have their own corresponding kinds of
types: value types and computation types.

As a simple example, consider a function that implements a polynomial
$x^2+x+10$. In our CBPV syntax this function could be written as a
\emph{computation} as follows:
\begin{footnotesize}
\begin{align*}
  \MTmLam{{\code{x}}}&\keyword{do}~\code{s}~\leftarrow~!~\code{times}~\code{x}~\code{x};\\*
    &\keyword{do}~\code{y}~\leftarrow~!~\code{add}~\code{x}~10;\\*
    &!~\code{add}~\code{s}~\code{y}
\end{align*}
\end{footnotesize}
The style of program resembles monadic or ANF intermediate representations:
the results of calls to functions are explicitly bound to variables using a
\keyword{do} binding. This makes the evaluation order explicit in the
program text.
Operationally, we read this function in an imperative manner.
First, the $\MTmLam{\code{{}x}}$ is a command to \emph{pop} an argument
off of the stack and store it in a local variable ${\code{x}}$. Next, we have a
\keyword{do} binding around a call to \code{times} with two
arguments. The imperative reading of this operation is that we
\emph{push} first the continuation for the ${s}$ value, and then the two
copies of the argument ${\code{x}}$ onto the stack, before forcing the
execution of the $\code{times}$ function. When that function returns,
we will execute the continuation, in which we perform a similar
process: push the continuation for the \code{add} operation, push the
argument $10$ then the argument ${\code{x}}$ and force the execution of the
$\code{add}$ function. When this function returns, we will push the
arguments ${\code{y}}$ and then ${\code{s}}$ and perform a tail call to the $\code{add}$
function.
In this way, CBPV computations provide an abstract model of
programming with a call-stack, using a stack-based calling
convention, i.e., without global registers, but with private local variables.

The typing of this function matches this operational
reading. The type of this polynomial computation is a \emph{computation type}
$\AInt \to \BReturn\AInt$, where $\AInt$ is a base value type of
fixed-precision integers. Computation types can be read in two dual ways: in terms of
how the computation behaves or in terms of the \emph{stack} that the
computation interacts with. From the computation perspective, the
function type indicates that the computation pops an argument $\AInt$
off of the stack, and then behaves as a $\BReturn \AInt$
computation. From the stack's perspective, the function type indicates
that before such a computation runs, the stack must consist of an
$\AInt$ pushed onto a stack implementing an $\AInt$ continuation. So
just as CBPV computations are an abstraction of
call-stack-manipulating programs, CBPV computation \emph{types} can be
viewed as a type structure for call stacks, and so can be used to
encode stack-based \emph{calling conventions}. The types of
$\code{add}$ and $\code{times}$ are here assumed to be
$\AThk{(\AInt\to\AInt\to\BReturn{\AInt})}$. The $\AThk{}$ here
indicates that $\code{add}$ and $\code{times}$ are \emph{thunked} or
\emph{suspsended} computations, meaning they are first-class values
that could be passed as arguments or returned from
functions. Operationally, this is implemented as a \emph{closure}.

\paragraph{Complex calling conventions}

So far we've seen two computation type constructors: $A \to B$, which
represent computations that expect an $A$ argument at the top of the
stack, and $\BReturn{A}$ which represent computations that return to
an $A$-continuation. We've also seen that these computation types can
be used compositionally: a function that expects two arguments on the
stack can be given type $A_1 \to A_2 \to B$, which is equivalent to
the uncurried $A_1\times A_2 \to B$, unlike in call-by-value
languages.

We next show how to encode more complex calling conventions: functions
with an optional argument ($\to^?$), functions that take any number
number of arguments ($\to^*$), and functions that take at least one
argument ($\to^+$):
\begin{itemize}
\item $A \to^? B$ is defined as $\BWithLit{~\code{.some}: A \to B,~\code{.none}: B~}$,
\item $A \to^* B$ is defined as $\BWithLit{~\code{.more}: A \to A \to^* B,~\code{.done}: B~}$, and
\item $A \to^+ B$ is compositionally defined as $A \to A \to^* B$.
\end{itemize}
The type $\&\{B_d\}_{d\in D}$ is a \emph{lazy product}
computation type with named projections $d \in D$. We use a linear logic
notation $\&$ for this product to distinguish it from the value
product $A \times A'$, which is a strict product whose values are
pairs.
To understand $\&$, consider the $\to^?$ type. From the computation perspective, the type $A \to^? B$ is
a lazy pair of computations: one that takes one argument $A \to B$,
accessible with the projection named $\code{.some}$ and the other,
which takes no $A$ argument, accessible with the projection named
$\code{none}$.
From the stack perspective, the stack for a lazy product is like a sum
type: the stack consists of a \emph{destructor name} pushed onto the
stack for the corresponding type. So for $A \to^? B$, the stack
consists of either a $\code{.some}$ followed by an $A$ and a stack for
$B$ or a $\code{.none}$ followed by a stack for $B$.
Then computations of product type are implemented by \emph{co}-pattern
matching on the stack to pop the destructor and behave according to
which destructor is found. For instance, a function $\AInt \to^?
\BReturn \AInt$ that either returns the input value, defaulting to $0$
if none is provided would be implemented as
\begin{footnotesize}
\[
\keyword{comatch}~|~\code{.some}~\Rightarrow~\MTmLam{{\code{n}}}{\MReturn{\code{n}}}~|~\code{.none}~\Rightarrow~\MReturn{0}
\]
\end{footnotesize}
We use a post-fix notation for destructors, making them look somewhat
like methods in an object-oriented language, this matches the syntax
for argument passing with the function type, as both are interepreted
as pushing something onto the stack: a destructor for $\&$ and a value
for $\to$. Here $\MReturn{V}$ is the syntax for returning a value $V$
to the continuation currently stored on the stack.

The next example of a variable-arity function $A \to^* B$ is notable
in that it is a \emph{recursive} computation type. A function that
takes variable number of arguments is implemented here as a product of
a function that takes one argument $\code{.none} : B$ and a function
that takes one argument and then behaves recursively as a function
taking a variable number of arguments $\code{.more} : A \to A \to^*
B$. Viewed from the perspective of the stack, this is similar to a
cons-list: the stack is either tagged as $\code{.none}$ and contains a
$B$ stack, or is tagged as $\code{.more}$ and contains an $A$ value
pushed onto an $A\to^* B$ stack.

As an example of using recursive computation types, consider
a recursive function $\code{sum\_and\_mult} : \AThunk{(\AInt \to^+
  \AInt \to^? \BReturn{\AInt})}$ that sums up one or more arguments
and, if provided, multiplies the result by an optionally provided
multiplier.
\begin{footnotesize}
\begin{align*}
  (\MFix{\code{loop} &: \AThk{(\AInt \to \AInt\to^*\AInt\to^?\BReturn{\AInt})}}\MTmLam{\code{sum}}\keyword{comatch}
    \\ &|~\code{.more}~\Rightarrow\MTmLam{{\code{i}}}~\MBind{{\code{j}}}{\MForce{\code{add}~\code{sum}~{\code{i}}}}{\MForce{\code{\code{loop}}~{\code{j}}}}
    \\ &|~\code{.done}~\Rightarrow~\keyword{comatch}
    \\ &\qquad|~\code{.some}~\Rightarrow\MTmLam{{\code{i}}}~\MForce{\code{mult}~\code{sum}~{\code{i}}}
    \\ &\qquad|~\code{.none}~\Rightarrow~\MReturn{\code{sum}})~0
\end{align*}
\end{footnotesize}
We use $\keyword{fix}$ to construct a recursive thunked computation
$\code{loop}$ that implements a tail-recursive loop. The loop takes an
extra accumulator parameter $\code{sum}$ which is initialized to $0$.
After popping the accumulator $\code{sum}$, the loop
$\keyword{comatch}$es on the stack.  If the top of the stack is a
$\code{.more}$, we pop the next argument, add it to the sum and
continue the loop.  If the top of the stack is a $\code{.done}$, we
$\keyword{comatch}$ on the remainder of the stack.  If the top of the
stack is $\code{.some}$, we pop the argument, multiply it by the sum
by the argument and return the result, whereas the top of the stack is
$\code{.none}$, we directly return the sum.

Operationally, this computation ``walks'' up a stack which may be of
arbitrary size, using the destructors as stack tags to identify where
data is stored. Later we will use a similar approach to model
stack-walking exception handling.

\paragraph{Abstract Machine Interpreter}
\label{sec:abstract-machine-cbv}

\begin{figure}
  \begin{footnotesize}
    
\begin{align*}
  \code{Expr} : \VTy~&\defeq~+\,\{
                \,\,\code{Var}: \AString,~
                    \code{Lam}: \AString \times \code{Expr},~
                    \code{App}: \code{Expr} \times \code{Expr},
    \\ & \qquad\quad\code{True}: \AUnit,~
                    \code{False}: \AUnit,~
                    \code{If}: \code{Expr} \times \code{Expr} \times \code{Expr}
  \,\,\}\\
  \code{Value} : \VTy~&\defeq~+\,\{
                \,\,\code{True}: \AUnit,~
                    \code{False}: \AUnit,~
                    \code{Closure}: \AThunk{(\code{Value} \to \code{Machine})}
  \,\,\}\\
  \code{Machine} : \CTy~&\defeq~\&\{
                \,\,\code{.fun}: \code{Expr} \to \code{Env} \to \code{Machine},~
                    \code{.app}: \code{Value} \to \code{Machine},
    \\ & \qquad\quad\code{.if}: \code{Expr} \to \code{Expr} \to \code{Env} \to \code{Machine},~
                    \code{.end}: \BReturn{\code{Answer}}
  \,\,\}\\
  \code{Answer} : \VTy~&\defeq~+\,\{
                \,\,\code{Ok}: \code{Value},~
                    \code{Err}: \AUnit
  \,\,\}
\end{align*}
\begin{align*}
  &\code{descend}
  : \AThunk{(\code{Expr} \to \code{Env} \to \code{Machine})}~\defeq~\{
    \\ &\qquad\MTmLam{\code{e}~\code{g}}{\keyword{match}~\code{e}}
    \\ &\qquad\qquad |~\,\code{Var}(\code{x}) \Rightarrow \MBind{v}{\MForce{\code{lookup}}~\code{x}~\code{g}}{}
    \\ &\qquad\qquad\qquad \code{match}~\code{v}
    \\ &\qquad\qquad\qquad |~\,\code{Err()} \Rightarrow \MForce{\code{error}}
    \\ &\qquad\qquad\qquad |~\,\code{Ok}(\code{v}) \Rightarrow \MForce{\code{ascend}}~\code{v}
    \\ &\qquad\qquad |~\,\code{True()}~ \Rightarrow \MForce{\code{ascend}}~\code{True()}
    \\ &\qquad\qquad |~\,\code{False()}~ \Rightarrow \MForce{\code{ascend}}~\code{False()}
    \\ &\qquad\qquad |~\,\code{Lam}~(\code{x},\code{e}) \Rightarrow \MForce{\code{ascend}}~\code{Closure}(\VThunk{~
      \MForce{\code{descend}}~\code{e}~((\code{x}, \code{v}) :: \code{g})
    ~})
    \\ &\qquad\qquad |~\,\code{App}~(\code{f},\code{a}) \Rightarrow \MForce{\code{descend}}~\code{f}~\code{g}~\code{.fun}~\code{a}~\code{g}
    \\ &\qquad\qquad |~\,\code{If}~(\code{c},\code{t},\code{f}) \Rightarrow \MForce{\code{descend}}~\code{c}~\code{g}~\code{.if}~\code{t}~\code{f}~\code{g}
    \,\,\}
  \\
  &\code{ascend}
  : \AThunk{(\code{Value} \to \code{Machine})}~\defeq~\{
    \\ &\qquad\MTmLam{\code{v}}{}\code{comatch}
    \\ &\qquad\qquad |~\code{.fun}\Rightarrow \MTmLam{{\code{a}}~{\code{g}}}\MForce{\code{descend}}~\code{a}~\code{g}~\code{.app}~\code{v}
    \\ &\qquad\qquad |~\code{.app}\Rightarrow \MTmLam{\code{f}}\code{match}~\code{f}
    \\ &\qquad\qquad\qquad |~\,\code{True()}~|~\,\code{False()} \Rightarrow \MForce{\code{error}}
    \\ &\qquad\qquad\qquad |~\,\code{Closure}(c) \Rightarrow !~\code{c}~\code{v}
    \\ &\qquad\qquad |~\code{.if} \Rightarrow \MTmLam{\code{t}~\code{f}~\code{g}}\code{match}~\code{v}
    \\ &\qquad\qquad\qquad |~\,\code{True()} \Rightarrow \MForce{\code{descend}}~\code{t}~\code{g}
    \\ &\qquad\qquad\qquad |~\,\code{False()} \Rightarrow \MForce{\code{descend}}~\code{f}~\code{g}
    \\ &\qquad\qquad\qquad |~\,\code{Closure}(\_) \Rightarrow \MForce{\code{error}}
    \\ &\qquad\qquad |~\code{.end} \Rightarrow \MReturn{\code{Ok}(\code{v})}
  \,\,\}
  \\
  &\code{error} : \AThunk{\code{Machine}}~\defeq~\{
    \,\,\keyword{comatch}
    \\ &\qquad\qquad|~\code{.fun}~|~\code{.app}~|~\code{.if}~\Rightarrow \MTmLam{\_} \MForce{\code{error}}
    \\ &\qquad\qquad|~\code{.end} \Rightarrow \MReturn{\code{Err()}}
  \,\,\}\\
  &\code{eval} : \AThunk{\code{Expr} \to \BReturn{\code{Answer}}} ~\defeq~\{~\MTmLam{\code{e}}~\MForce{\code{descend}}~\code{e}~\code{[]}~\code{.end}~\}
\end{align*}
  \end{footnotesize}
\caption{Abstract Machine-Based Interpreter}
\label{fig:abstract-machine-cbv}
\end{figure}

As a larger example of using CBPV as a language for
stack-manipulation, we implement a stack-based abstract machine
interpreter for an untyped call-by-value lambda calculus in
\cref{fig:abstract-machine-cbv}.
Our object language syntax is encoded in a recursive $\code{Expr}$
type implemented as a labeled sum type of variables, booleans,
if-expressions, lambdas or applications.
At runtime, the semantic values of the language are booleans or thunks
of functions that take one argument, which implement the closures in
the object language using CBPV closures.
The interesting aspect of our interpreter is that we implement the
interaction with the stack of the abstract machine using the ambient
notion of stack in CBPV. That is, we define a computation type of
$\code{Machine}$s which interact with the stack of the object
language.
The $\code{Machine}$ type is implemented as a product with the
four different destructors corresponding to the different
``frames'' of the abstract machine stack.
A $\code{.fun}$ frame indicates we are evaluating the function part of an application, so we are given an argument $\code{Expr}$ to be evaluated next in the provided $\code{Env}$. An $\code{.app}$ frame indicates we are evaluating the argument to a function application and when we are done we should apply the
provided function $\code{Value}$ to it. An $\code{.if}$ frame indicates
we are evaluating the condition of an $\code{if}$ expression, and
based on its value we will either execute either of the provided branch $\code{Expr}$s
with the provided $\code{Env}$. Finally, an $\code{.end}$ frame indicates
the base case of an ``empty'' stack, in which case we should return a final result of our interpreter: either
the value of the expression that is being evaluated, or an error if we
found a variable that was not in scope.

We implement the interpreter using a pair of mutually recursive
thunked functions. $\code{descend}$ descends into an input expression,
pushing frames onto the stack as it goes until it reaches a value
form. When a value form is reached, we $\code{ascend}$ with the
semantic value, interacting with the stack accordingly.
In the case that a variable is out of scope, we produce an error,
walking the stack popping off stack frames until we reach the
$\code{.end}$ and produce the error.
To run the abstract machine from a start state, the function
$\code{eval}$ provides an empty environment $\code{[]}$ and
an empty stack $\code{.end}$.

\section{Syntax and Semantics of Polymorphic CBPV}
\label{sec:zydeco-syntax-semantics}

In this section, we introduce our calculus
\cbpv{} an extension of CBPV
calculus with $F_\omega$-style impredicative polymorphism and
recursive computation types.

\subsection{Call-by-push-value Syntax}
\label{sec:zydeco-syntax}

A full overview of syntactic forms of
\cbpv{} is given in
Figure~\ref{fig:syntax-cbpv}. There are six basic syntactic
forms. Since we have $F_\omega$-style of polymorphism
\cite{girard72}, we have \emph{kinds}. In addition to function
kinds, we have two different kinds of types in CBPV. \emph{Value
types} (kind $\VTy$) classify inert data that can be passed around as
first-class values. \emph{Computation types} (kind $\CTy$) classify
imperative computations that interact with the stack, or dually, classify stack structures imperative computations interact with. Type environments $\Delta$ contain type variables of various
kinds. Value environments $\Gamma$ contain value variables with their
associated value types. Then the two types of terms are inert
first-class \emph{values} and imperative \emph{computations}.

\begin{figure}
  \begin{footnotesize}
    \[\begin{array}{rrcl}
  \name{Kind}        & K       & \Coloneqq & \VTy \mid \CTy \mid K_1 \to K_2 \\
  \\
  \name{Type Env}    & \Delta     & \Coloneqq & \CtxtEmpty \mid \CtxtExtend{\Delta}{X}{K} \\
  \name{Type}        & S, A, B & \Coloneqq & X \mid \SLam{X:K}{S} \mid S~S_0 \\
          &&\mid & \AThunk{B}
            \mid \AUnit
            \mid \AProd{A_1}{A_2}
            \mid \ACoProd{A_C}{c \in C}
            \mid \AExists{X:K}{A}
          \\
          &&\mid & \BReturn{A}
            \mid \BArrow{A}{B}
            \mid \BWith{B_d}{d \in D}
            \mid \BForall{X:K}{B}
            \mid \BNu{Y:K}{B}
          \\
  \\
  \name{Value Env}   & \Gamma     & \Coloneqq & \CtxtEmpty \mid \CtxtExtend{\Gamma}{x}{A} \\
  \name{Value}       & V       & \Coloneqq & x \mid \VThunk{M}
            \mid \VUnit
            \mid \VPair{V_1}{V_2}
            \mid \VInj{c}{V}
            \mid \VPack{S}{V}
            \mid \VHalt
            \\
  \name{Computation} & M       & \Coloneqq & \MForce{V}
            \mid \MLetPair{x_1}{x_2}{V}{M}
            \\
          &&\mid & \MMatch{V}{\Inj{c}(x_c)}{M_c}{c \in C}
          \\
          &&\mid & \MLetPack{X}{x}{V}{M}
          \\
          &&\mid & \MReturn{V} \mid \MBind{x}{M_0}{M}
            \mid \MTmLam{x:A}{M} \mid M~V
          \\
          &&\mid & \MComatch{~.d}{M_d}{d \in D} \mid M~.d
          \\
          &&\mid & \MTyLam{X:K}{M} \mid M~S
            \mid \MRoll{M} \mid \MUnroll{M} \mid \MFix{x}{M}
\end{array}\]
  \end{footnotesize}
\caption{Syntax of \cbpv{}}
\label{fig:syntax-cbpv}
\end{figure}

We present the kinding judgment in Figure~\ref{fig:kinding}. We have
standard rules for type variables and function kinds. Next we have the
value type constructors. These include unit, products, and sums,
which act as in a call-by-value language, as well as
existentially quantified packages $\AExists{X:K}{A}$ which can
quantify over types of any kind. 
A sum type $\ACoProd{A_c}{c \in C}$ is a labeled sum with constructor tags $c \in C$.
Finally, we have the type constructor
$\AThk$ of \emph{thunks} that takes a computation type to the type of
first class suspended computations of that type, i.e.,
\emph{closures}.

Next we have the computation type constructors, which classify how imperative computations can interact with the runtime
stack, and therefore, what the state of the stack may be.
The type $A \to B$ for $A$ a value type and $B$ a computation type is
the type of computations which are \emph{functions} that pop an $A$
value off of the stack and then proceed as a $B$
computation. Therefore the stack must consist of an $A$ value pushed
onto a $B$ stack.
Similarly, polymorphic functions $\forall X. B$ are functions that
take a type as input.
The type $\BWith{B_d}{d \in D}$ is a ``lazy'' product type of a collection of $B_d$ computation types indexed by a set of
destructor tags $d \in D$. The stacks for these consist of a destructor tag $.d$ followed by a $B_d$ stack.
As a special case, $\BTop$ is a nullary product and its stacks begin with a tag drawn from the empty set. In other words, there are no possible stacks, and so this type represents computations that are dead code.
The recursive computation types $\nu Y. B$ allow for recursive
specifications of computations, or dually, of recursive types of
stacks, which we will see allows us to define stacks that can be
arbitrarily large.
%
It is also possible to have recursive value types, but we choose not to
include them because they are not needed for our examples.
Finally we have the type constructor $\BRet$ which takes a type of
values $A$ to the type of computations that can return values of type
$A$. Dually, the type of stacks is that of a continuation
for values of type $A$.

We have departed slightly from Levy's original CBPV syntax: inspired
by denotational models, he uses $U$ for $\AThk$ and $F$ for $\BRet$,
but we have chosen more operationally-motivated name. Additionally we
use the standard linear logic syntax for computation products.

\begin{figure}
\raggedright{
  \judgbox{$\ctxAssignKind{\Delta}{S}{K}$} ~~ type (constructor) $S$ has kind $K$ under type environment $\Delta$
}
\begin{footnotesize}
  \begin{mathpar}
  \judgment{
    X:K \in \Delta
  }{
    \Delta \vdash X:K
  }{TyTVar}
  \quad
  \judgment{
    \Delta, X:K_0 \vdash S:K
  }{
    \Delta \vdash \SLam{X:K_0}{S} : K_0 \to K
  }{TyTLam}
  \quad
  \judgment{
    \Delta \vdash S:K_0 \to K \\
    \Delta \vdash S_0:K_0
  }{
    \Delta \vdash S~S_0: K
  }{TyTApp}\\

  \AThk : \CTy \to \VTy

  \AUnit : \VTy

  (\times) : \VTy \to \VTy \to \VTy

  (+_C) : \multi{\VTy}_{c \in C} \to \VTy

  \judgment{
    \Delta, X:K_0 \vdash A:\VTy
  }{
    \Delta \vdash \AExists{X:K_0}{A} : \VTy
  }{TyExists}\\

  \BRet : \VTy \to \CTy

  (\to) : \VTy \to \CTy \to \CTy

  (\&_D) : \multi{\CTy}_{d \in D} \to \CTy

  \judgment{
    \Delta, X:K_0 \vdash B:\CTy
  }{
    \Delta \vdash \BForall{X:K_0}{B} : \CTy
  }{TyForall}

  \judgment{
    \Delta, Y:\CTy \vdash B:\CTy
  }{
    \Delta \vdash \BNu{Y:\CTy}{B} : \CTy
  }{TyNu}
\end{mathpar}
\end{footnotesize}
  \caption{Kinding Rules}
  \label{fig:kinding}
\end{figure}

Finally we give the typing rules for values and computations in
\cref{fig:value-typing,fig:computation-typing}.
First, we have the rules for values, which include variables, thunks,
which we denote with ``suspenders'' following the Frank language
\cite{lmm-2017}, and constructors for unit, product and sum types. The
elimination forms for these values, which are given by pattern
matching, are constructors on computations.

The \code{Ret} rules are similar to a monad, with \code{ret} as the
introduction rule and a bind for the elimination, but with an
arbitrary computation type allowed for the continuation.
We can force a \code{Thunk B} to get a \code{B} computation.
We add standard let and recursion rules, though note that the
recursive rule must put a \code{Thunk B} into the context, as all
variables are of value type.
Ordinary and polymorphic functions are given by $\lambda/\Lambda$ and
application rules.
Computation products have projections as their destructors and their
introduction rule is given by ``copattern match'' whose destructor
is on the stack \cite{copatterns}. Notationally, we use $\keyword{comatch}$ to
indicate the copattern match, and $M~.d$ to indicate the
application of the destructor $.d$ to the stack.
Additionally, $\MTop$ stands for the nullary copattern match.

\begin{figure}
\raggedright{
  \judgbox{$\ctxAssignType{\Delta;\Gamma}{V}{A}$}~~$V$ has type $A$ under type environment $\Delta$ and value environment $\Gamma$
}
\begin{footnotesize}
\begin{mathpar}
  \judgment{
    x:A \in \Gamma
  }{
    \ctxAssignType{\Delta;\Gamma}{x}{A}
  }{ValVar}

  \judgment{
    \ctxAssignType{\Delta;\Gamma}{M}{B}
  }{
    \ctxAssignType{\Delta;\Gamma}{\VThunk{M}}{\AThunk{B}}
  }{ValThunk}

  \VUnit : \AUnit

  \judgment{
    \ctxAssignType{\Delta;\Gamma}{V_1}{A_1} \\
    \ctxAssignType{\Delta;\Gamma}{V_2}{A_2}
  }{
    \ctxAssignType{\Delta;\Gamma}{\VPair{V_1}{V_2}}{\AProd{A_1}{A_2}}
  }{ValPair}

  \judgment{
    \ctxAssignType{\Delta;\Gamma}{V}{A_c}
  }{
    \ctxAssignType{\Delta;\Gamma}{\VInj{c}{V}}{\ACoProd{A_c}{c \in C}}
  }{ValInj}

  \judgment{
    \ctxAssignType{\Delta;\Gamma}{V}{\subst{A}{S}{X}}
  }{
    \ctxAssignType{\Delta;\Gamma}{\VPack{S}{V}}{\AExists{(X:K)}{A}}
  }{ValPack}
\end{mathpar}
\end{footnotesize}
\caption{Value Typing for \cbpv{}}
\label{fig:value-typing}
\end{figure}

\begin{figure}
\raggedright{
  \judgbox{$\ctxAssignType{\Delta;\Gamma}{M}{B}$}~~$M$ has type $B$ under type environment $\Delta$ and value environment $\Gamma$
}
\begin{footnotesize}
  \begin{mathpar}
  \judgment{
    \ctxAssignType{\Delta;\Gamma}{V}{\AThunk{B}}
  }{
    \ctxAssignType{\Delta;\Gamma}{\MForce{V}}{B}
  }{ComForce}

  \judgment{
    \ctxAssignType{\Delta;\Gamma}{V}{A} \\
    \ctxAssignType{\Delta;\Gamma, x:A}{M}{B}
  }{
    \ctxAssignType{\Delta;\Gamma}{\MLet{x}{V}{M}}{B}
  }{ComLet}

  \judgment{
    \ctxAssignType{\Delta;\Gamma}{V}{\AProd{A_0}{A_1}} \\
    \ctxAssignType{\Delta;\Gamma, x_0:A_0, x_1:A_1}{M}{B}
  }{
    \ctxAssignType{\Delta;\Gamma}{\MLetPair{x_0}{x_1}{V}{M}}{B}
  }{ComPrj}

  \judgment{
    \ctxAssignType{\Delta;\Gamma}{V}{\ACoProd{A_c}{c \in C}} \\
    \forall c \in C, \ \ \ctxAssignType{\Delta;\Gamma, x_c:A_c}{M_c}{B}
  }{
    \ctxAssignType{\Delta;\Gamma}{\MMatch{V}{\Inj{c}(x_c)}{M_c}{c \in C}}{B}
  }{ComMatch}

  \judgment{
    \ctxAssignType{\Delta;\Gamma}{V}{\AExists{(Y:K)}{A}} \\
    \ctxAssignType{\Delta, X:K;\Gamma, x:\subst{A}{X}{Y}}{M}{B}
  }{
    \ctxAssignType{\Delta;\Gamma}{\MLetPack{X}{x}{V}{M}}{B}
  }{ComUnpack}

  \judgment{
    \ctxAssignType{\Delta;\Gamma}{V}{A}
  }{
    \ctxAssignType{\Delta;\Gamma}{\MReturn{V}}{\BReturn{A}}
  }{ComReturn}

  \judgment{
    \ctxAssignType{\Delta;\Gamma}{M_0}{\BReturn{A}} \\
    \ctxAssignType{\Delta;\Gamma, x:A}{M}{B}
  }{
    \ctxAssignType{\Delta;\Gamma}{\MBind{x}{M_0}{M}}{B}
  }{ComBind}

  \judgment{
    \ctxAssignType{\Delta;\Gamma, x:A}{M}{B}
  }{
    \ctxAssignType{\Delta;\Gamma}{\MTmLam{(x:A)}{M}}{\BArrow{A}{B}}
  }{ComLam}

  \judgment{
    \ctxAssignType{\Delta;\Gamma}{M}{\BArrow{A}{B}} \\
    \ctxAssignType{\Delta;\Gamma}{V}{A}
  }{
    \ctxAssignType{\Delta;\Gamma}{M~V}{B}
  }{ComApp}

  \judgment{
    \forall d \in D, \ \ \ctxAssignType{\Delta;\Gamma}{M_d}{B_d}
  }{
    \ctxAssignType{\Delta;\Gamma}{(\MComatch{.d}{M_d}{d \in D})}{\BWith{B_d}{d \in D}}
  }{ComCoMatch}

  \judgment{
    \ctxAssignType{\Delta;\Gamma}{M}{\BWith{B_d}{d \in D}}
  }{
    \ctxAssignType{\Delta;\Gamma}{M~.d}{B_d}
  }{ComDtor}

  \judgment{
    \ctxAssignType{\Delta, Y:K;\Gamma}{M}{\subst{B}{Y}{X}}
  }{
    \ctxAssignType{\Delta;\Gamma}{\MTyLam{(Y:K)}{M}}{\BForall{(X:K)}{B}}
  }{ComTyLam}

  \judgment{
    \ctxAssignType{\Delta;\Gamma}{M}{\BForall{(X:K)}{B}}
  }{
    \ctxAssignType{\Delta;\Gamma}{M~S}{\subst{B}{S}{X}}
  }{ComTyApp}

  \judgment{
    \ctxAssignType{\Delta;\Gamma}{M}{\subst{B}{\BNu{(Y:K)}{B}}{Y}}
  }{
    \ctxAssignType{\Delta;\Gamma}{\MRoll{M}}{\BNu{(Y:K)}{B}}
  }{ComRoll}

  \judgment{
    \ctxAssignType{\Delta;\Gamma}{M}{\BNu{(Y:K)}{B}}
  }{
    \ctxAssignType{\Delta;\Gamma}{\MUnroll{M}}{\subst{B}{\BNu{(Y:K)}{B}}{Y}}
  }{ComUnroll}

  \judgment{
    \ctxAssignType{\Delta;\Gamma, x:\AThunk{B}}{M}{B}
  }{
    \ctxAssignType{\Delta;\Gamma}{\MFix{x}{M}}{B}
  }{ComFix}
\end{mathpar}
\end{footnotesize}
\caption{Computation Typing for \cbpv{}}
\label{fig:computation-typing}
\end{figure}

\subsection{Operational Semantics with a Stack Machine}
\label{sec:zydeco-opsem}

\newcommand{\machine}[2]{\ensuremath{\langle #1 \mid #2 \rangle}}
\newcommand{\machineStep}[2]{\ensuremath{#1 \longrightarrow #2}}
\newcommand{\assignSubst}[2]{\ensuremath{#1 / #2}}

We present the operational semantics of \cbpv{} in \cref{fig:opsem},
using an abstract stack machine.
First we inductively define stacks to be either an empty stack
$\SAmbient$ or a stack frame for one of the computation types. The
$\BRet$ frame is a continuation, application and type application
frames have an argument pushed on the stack. The computation product
has a destructor tag pushed onto the stack and the recursive
computation has an unroll on the stack. We note that the type
application and unroll are not strictly necessary at runtime but make
it easier to keep track of the typing of the stack. An initial state
is a returning computation running against the empty stack
$\machine{M}{\SAmbient}$ and the computation terminates if it reaches
a state $\machine{\MReturn{V}}{\SAmbient}$.

The rules that appear in \cref{fig:opsem} are standard for a CBPV
calculus with a stack machine, similar to Levy's original stack
machine semantics \cite{levy-stack-2005}.
Each of the computation elimination rules operates by pushing a frame
onto the stack, and each computation introduction rule operates by
popping the stack frame off and proceeding in a manner dependent on
the contents of the frame.
Forcing a thunk executes the suspended computation within.
Let, fixed points and pattern matching are straightforward.

\begin{figure}
\begin{footnotesize}
\[\begin{array}{rrcl}
  \name{Stack}       & \Stk       & \Coloneqq & \SAmbient \mid \SKont{x}{M}{\Stk} \mid \SApp{V}{\Stk} \mid \SApp{V}{\Stk} \mid \SDtor{i}{\Stk} \mid \SUnroll{\Stk}
\end{array}\]  
\end{footnotesize}

\raggedright{
  \judgbox{\machineStep{\machine{M}{\Stk}}{\machine{M'}{\Stk'}}}
  ~~$M$ with stack $\Stk$ steps to $M'$ with stack $\Stk'$
}
\begin{footnotesize}
\begin{mathpar}
  \judgment{
  }{
    \machineStep{
      \machine{\MReturn{V}}{\SKont{x}{M}{\Stk}}
    }{
      \machine{\subst{M}{V}{x}}{\Stk}
    }
  }{OpRet}

  \judgment{
  }{
    \machineStep{
      \machine{\MBind{x}{M_1}{M_2}}{\Stk}
    }{
      \machine{M_1}{\SKont{x}{M_2}{\Stk}}
    }
  }{OpBind}

  \judgment{
  }{
    \machineStep{
      \machine{\MForce{\VThunk{M}}}{\Stk}
    }{
      \machine{M}{\Stk}
    }
  }{OpForceThunk}

  \judgment{
  }{
    \machineStep{
      \machine{\MLet{x}{V}{M}}{\Stk}
    }{
      \machine{\subst{M}{V}{x}}{\Stk}
    }
  }{OpLet}

  \judgment{
  }{
    \machineStep{
      \machine{\MLetPair{x_0}{x_1}{\VPair{V_0}{V_1}}{M}}{\Stk}
    }{
      \machine{\subst{\subst{M}{V_0}{x_0}}{V_1}{x_1}}{\Stk}
    }
  }{OpPrj}

  \judgment{
  }{
    \machineStep{
      \machine{\MMatch{\VInj{c}{V}}{\Inj{c}(x_c)}{M_c}{c \in C}}{\Stk}
    }{
      \machine{\subst{M_c}{V}{x_c}}{\Stk}
    }
  }{OpMatch}

  \judgment{
  }{
    \machineStep{
      \machine{\MLetPack{X}{x}{\VPack{S}{V}}{M}}{\Stk}
    }{
      \machine{\subst{M}{V}{x}}{\Stk}
    }
  }{OpUnpack}

  \judgment{
  }{
    \machineStep{
      \machine{M~V}{\Stk}
    }{
      \machine{M}{\SApp{V}{\Stk}}
    }
  }{OpApp}

  \judgment{
  }{
    \machineStep{
      \machine{\MTmLam{x}{M}}{\SApp{V}{\Stk}}
    }{
      \machine{\subst{M}{V}{x}}{\Stk}
    }
  }{OpLam}\quad
  \judgment{
  }{
    \machineStep{
      \machine{M~.d}{\Stk}
    }{
      \machine{M}{\SDtor{d}{\Stk}}
    }
  }{OpDtor}

  \judgment{
  }{
    \machineStep{
      \machine{(\MComatch{.d}{M_d}{d \in D})}{\SDtor{d}{\Stk}}
    }{
      \machine{M_d}{\Stk}
    }
  }{OpComatch}

  \judgment{
  }{
    \machineStep{
      \machine{M~S}{\Stk}
    }{
      \machine{M}{\STyApp{S}{\Stk}}
    }
  }{OpTyApp}\quad
  \judgment{
  }{
    \machineStep{
      \machine{\MTyLam{X}{M}}{\STyApp{S}{\Stk}}
    }{
      \machine{\subst{M}{S}{X}}{\Stk}
    }
  }{OpTyLam}

  \judgment{
  }{
    \machineStep{
      \machine{\MRoll{M}}{\SUnroll\Stk}
    }{
      \machine{M}{\Stk}
    }
  }{OpRoll}

  \judgment{
  }{
    \machineStep{
      \machine{\MUnroll{M}}{\Stk}
    }{
      \machine{M}{\SUnroll\Stk}
    }
  }{OpUnroll}

  \judgment{
  }{
    \machineStep{
      \machine{\MFix{x}{M}}{\Stk}
    }{
      \machine{M[\VThunk{\MFix{x}{M}} / x]}{\Stk}
    }
  }{OpFix}
\end{mathpar}
\end{footnotesize}
\caption{Operational Semantics of \cbpv{} via a Stack Machine}
\label{fig:opsem}
\end{figure}

\section{Relative Monads in CBPV}
\label{sec:relative-monads-in-zydeco}

Consider our example implementation of a polynomial function $x^2+x+c$
operating on fixed-precision integers from \cref{sec:cbpv-intro}.  How
would we change this code if we wanted to produce an \emph{error} if
any of the three arithmetic operations were to result in an overflow?
Haskell programmers know well that this can be achieved easily by
using an error monad such as $\code{Either}~\code{e}~\code{a} = \ACoProdLit{\code{Left}:\code{e},~\code{Right}:\code{a}}$. That is, we
would change to versions of our $\code{times}$ and $\code{add}$
operations that return not an $\AInt$ but an
$\code{Either}~\AString~\AInt$, and we would chain the calls together
using a monadic bind, maintaining the style of the original code, but
implicitly short-circuiting and producing an error if any of the three
calls produces an error.  How can we adapt this pattern to a CBPV
setting? The output of $\code{add}$ and $\code{times}$ is not just
$\AInt$ but $\BReturn{\AInt}$. Operationally, the $\BReturn{\AInt}$
type indicates that the stack is a continuation that we can return
$\AInt$s to. To produce erroring computations, we would want to
replace this with a stack that can support not just returning $\AInt$
values, but also producing errors with an error message given as a
$\AString$. So we want a type constructor $T : \VTy \to \CTy$ that
takes a type of values $A$ to the type of computations that either error
or produce $A$ values.

The interface that this type needs to satisfy is the following
modification of Haskell's monad typeclass:
\begin{footnotesize}
  \begin{align*}
    \BMo~(\code{T} : \VTy\to\CTy) ~\defeq~&\{\,\,\code{.return}~: ~\BForall{(\code{A}: \VTy)}{\code{A} \to \code{T}~\code{A}} \\
          &\,\,\,\,\,\code{.bind}~: ~\BForall{(\code{A}~\code{A'}: \VTy)} {\AThunk{(\code{T}~\code{A})} \to \AThunk{(\code{A} \to \code{T}~\code{A'})} \to \code{T}~\code{A'}} \,\,\}
\end{align*}
\end{footnotesize}
Here, our notion of relative monad is parameterized by a type
constructor $\code{T}$ that takes a type of values $\code{A}$ to the type of $\code{T}~\code{A}$
computations that perform $\code{T}$-effects and return $\code{A}$-values.
The \code{.return} and \code{.bind} operations are our adaptation of
the usual constructors for a monad. The \code{.return} operation specifies how to
construct a pure $\code{T}~\code{A}$ computation from an $\code{A}$ value. The \code{.bind} operation
specifies how to sequentially compose effectful computations: given a
$\AThunk(\code{T}~\code{A})$ computation and a continuation $\AThunk(\code{A} \to \code{T}~\code{A'})$
that takes in $\code{A}$ values and produces a $\code{T}~\code{A'}$ computation, we can
construct a $\code{T}~\code{A'}$ computation that performs the computations in
sequence, feeding the results of the first as inputs into the second.

$\BMo~\code{T}$ is not a \emph{monad} in the usual sense. A monad in category
theory is an endofunctor, meaning its corresponding type constructor
$\code{T}$ should return the same \emph{kind} of types that it takes
in. Instead $\BMo~\code{T}$ fits into the more general category-theoretic
notion of a \emph{relative} monad, which we explain in
\cref{sec:fund-thm}.

We can also consider relative monads in terms of our dual view of
computation types as encoding a type of stacks. In this view,
$\BReturn{\code{A}}$ is the built-in notion of a continuation, so a relative
monad $\code{T}~\code{A}$ should then be thought of as a \emph{user-defined
notion of continuation}.
%
In this view, \code{.return} specifies how to return an $\code{A}$ value to a $\code{T}~\code{A}$ continuation. 
Second, \code{.bind} is viewed in reverse as taking in a $\code{T}~\code{A'}$
continuation and composing it with a function $\AThunk(\code{A} \to \code{T}~\code{A'})$
that takes in $\code{A}$ values and interacts with a $\code{T}~\code{A'}$ computation to
produce a continuation for $\code{A}$ values.
%

Returning to our example, if our $\code{add}$ and $\code{times}$ were
re-written to output $\code{T}~\AInt$ rather than $\BReturn{\AInt}$, we could
re-write our polynomial computation by replacing the built-in
\code{do} notation with calls to $\MForce{\code{m}}~\code{.bind}$ given a
monad implementation $\code{m} : \AThunk{(\BMonad{T})}$:
\begin{footnotesize}
  \begin{align*}
  \MTmLam{\code{x}}&!~\code{m}~\AInt~\AInt~\code{.bind}~\{~!~\code{times}~\code{x}~\code{x}~\}~\{~\MTmLam{\code{s}}\\
    &!~\code{m}~\AInt~\AInt~\code{.bind}~\{~!~\code{add}~\code{x}~10~\}~\{~\MTmLam{\code{y}} \\
  &!~\code{add}~\code{s}~\code{y}~
  \}\}
\end{align*}
\end{footnotesize}
We provide an analog to Haskell's \keyword{do} notation in
\cref{sec:mo-blocks-and-alg-trans}.

\subsection{Examples}

The first example of a relative monad is the simplest one: $\BRet$ itself.
\begin{footnotesize}
  \begin{align*}
    &\code{mret} : \AThunk{(\BMonad{\BRet})}~\defeq~\{~\keyword{comatch} \\
      &\qquad\qquad\begin{aligned}
        &|~\code{.return}~\Rightarrow~\MTyLam{\code{A}}{\MTmLam{\code{a}}{\keyword{ret}~\code{a}}} \\
        &|~\code{.bind}~\Rightarrow~\MTyLam{\code{A}~\code{A'}}{\MTmLam{\code{m}~\code{f}}{
          \keyword{do}~\code{a}~\leftarrow~!~\code{m};
          ~!~\code{f}~\code{a}~}}\}
      \end{aligned}
\end{align*}
\end{footnotesize}
The \code{return} and \code{bind} for this monad are essentially built
into CBPV.  In fact, the $\BRet$ monad plays the same role in CBPV
that the \code{Identity} monad does in Haskell, representing ``no
effects'' or at least no \emph{more} effects than the ambient notion
of effect allowed in the language.

Next in Figure~\ref{fig:three-exception-monads}, we provide three different ways we could implement a monad that
would model the erroring computations from our motivating example.
First, \code{Exn} is a direct translation of Haskell's \code{Either}
type: an erroring computation is one that returns either an error or a
successful value.
Then we can implement \code{return} which specifies how a successful
computation can be embedded in a possibly erroring one. Finally we
implement a monadic \code{bind} that tells us how to extend an
erroring computation with a continuation for its succesful values: if
an error value is produced, we return it immediately, discarding the
continuation. Otherwise if a successful value is produced we execute
the continuation on it.
This formulation of exceptional computations has well known
performance issues: every \code{bind} pattern matches on the sum,
causing branching even though the exception ``handler'' doesn't
change.

The downsides of the first design are addressed by the second
implementation: the Church-encoded exception type \code{ExnK}. This is
the well-known technique of ``double-barreled'' continuation-passing
\cite{double-barrel}. This type
uses computation types non-trivially: the stack for an erroring
computation contains a pair of a continuation, one for errors and one
for successful return values.

The stack for this computation is
one that contains two continuations, and a tail of the stack which has
an abstract type $\code{R}$. Parametricity of the language ensures that this
result type $\code{R}$ cannot leak any information
\cite{parametricity-eec}.
We can easily tell from the implementation of the monad interface that,
unlike the previous solution, no intermediate data constructor is
produced, eliminating unnecessary conditional branching as the control
flow is carried out by invoking the correct continuation directly.
As we are in a continuation-passing style now, the stack grows only by
the continuations themselves growing bigger and bigger along the way,
eventually being run when a return or raise is executed.

A third method for implementing raising of exceptions is via explicit
stack-walking to find the nearest enclosing exception handler, which
we model with $\code{ExnDe}$.
We can encode this structure as well in CBPV by using a coinductive
codata type, which we can view as ``defunctionalizing'' the
continuations into explicit stack frames
\cite{reynolds-definitional-1972,wand-1980}.
In this version, the exception type is given as a recursive product
computation type, meaning the stack ends up being of arbitrarily large
size.
Instead of extending the current continuations on the stack, we
implement success continuations and handlers by pushing \code{.kont}
and \code{.try} continuations onto the stack directly before making a
call.
Then exception raising code walks the stack to find the nearest
enclosing handler. If desired this method could be adjusted to include
an error value that the exception handles, so that exception raising
code can skip over handlers that don't explicitly catch that type of
exception.

\begin{figure}
  \begin{minipage}{0.35\textwidth}\begin{footnotesize}
    \begin{align*}
    \code{Exn}~&\defeq~\SLam{(\code{E}~\code{A}: \VTy)}{
        \BReturn{(\ACoProdLit{\code{Err}:\code{E}, \code{Ok}:\code{A}})}
    }\\
    \code{mexn}~&\defeq~\{~\MTyLam{(\code{E}: \VTy)}{\keyword{comatch} \\
      &\qquad\quad\begin{aligned}
        &|~\code{.return}~\Rightarrow~\MTyLam{\code{A}}~\MTmLam{\code{a}}~\keyword{ret}~(\code{Ok}(\code{a})) \\
        &|~\code{.bind}~\Rightarrow~\MTyLam{\code{A}~\code{A'}}\MTmLam{\code{m}~\code{f}} \\
        &\qquad\keyword{do}~\code{a?}~\leftarrow~!~\code{m}; \\
        &\qquad\keyword{match}~\code{a?} \\
        &\qquad|\,\,\,\code{Err}(\code{e})~\Rightarrow~\keyword{ret}~(\code{Err}(\code{e})) \\
        &\qquad|\,\,\,\code{Ok}(\code{a})~\Rightarrow~!~\code{f}~\code{a}~\} \\
      \end{aligned}
    }
    \end{align*}
  \end{footnotesize}\end{minipage}
  \begin{minipage}{0.45\textwidth}\begin{footnotesize}
    \begin{align*}
    \code{ExnK}~&\defeq~\SLam{(\code{E}~\code{A}: \VTy)}{
      \BForall{\code{R}:\CTy} \\ &\BArrow{\AThk{(\BArrow{\code{E}}{\code{R}})}}{\BArrow{\AThk{(\BArrow{\code{A}}{\code{R}})}}{\code{R}}}}\\
    \code{mexnk}~&\defeq~\{~\MTyLam{(\code{E}: \VTy)}{\keyword{comatch} \\
        &\qquad\quad\begin{aligned}
          &|~\code{.return}~\Rightarrow \MTyLam{\code{A}}{\MTmLam{\code{a}}{ \\
          &\qquad\MTyLam{\code{R}}{\MTmLam{\code{ke}~\code{ka}}{!~\code{ka}~\code{a}}}}} \\
          &|~\code{.bind}~\Rightarrow \MTyLam{\code{A}~\code{A'}}{\MTmLam{\code{m}~\code{f}}{ \\
            &\qquad\MTyLam{\code{R}}{\MTmLam{\code{ke}~\code{ka}}{!~\code{m}~\code{R}~\code{ke} \\
            &\qquad\quad\VThunk{~\MTmLam{\code{a}}{!~\code{f}~\code{a}~\code{R}~\code{ke}~\code{ka}}~}
            }}
          }}~\}
        \end{aligned}
      }
    \end{align*}
  \end{footnotesize}\end{minipage}
  \\
  \begin{footnotesize}
    \begin{align*}
    \code{ExnDe}~&\defeq~\nu \code{ExnDe}.\SLam{(\code{E}~\code{A}: \VTy)}{ \\
      &\qquad\quad\begin{aligned}
        &  \&\{\,\,\code{.try}: \BForall{(\code{E'}: \VTy)}{\AThunk{(\code{E} \to \code{ExnDe}~\code{E'}~\code{A})} \to \code{ExnDe}~\code{E'}~\code{A}} \\
        & \quad\,\,\,\code{.kont}: \BForall{(\code{A'}: \VTy)}{\AThunk{(\code{A} \to \code{ExnDe}~\code{E}~\code{A'})} \to \code{ExnDe}~\code{E}~\code{A'}} \\
        & \quad\,\,\,\code{.done}: \BReturn{(\ACoProdLit{\code{Err}:\code{E},~\code{Ok}:\code{A}})}\,\,\}
      \end{aligned}
    }\\
    \code{mexnde}~&\defeq~\{~\MFix{\code{mexnde}}{\MTyLam{(\code{E}: \VTy)}{\keyword{comatch} \\
        &\qquad\quad\begin{aligned}
          &|~\code{.return}~\Rightarrow \MTyLam{\code{A}}\MTmLam{\code{a}}~\keyword{comatch} \\
          &\qquad|~\code{.try}~\Rightarrow \MTyLam{\code{E'}}{\MTmLam{\code{k}}{!~\code{mexnde}~\code{E'}~\code{.return}~\code{A}~\code{a}}} \\
          &\qquad|~\code{.kont}~\Rightarrow \MTyLam{\code{A'}}{\MTmLam{\code{k}}{!~\code{k}~\code{a}}} \\
          &\qquad|~\code{.done}~\Rightarrow \keyword{ret}~(\code{Ok}(\code{a})) \\
          &|~\code{.bind}~\Rightarrow \MTyLam{\code{A}~\code{A'}}{\MTmLam{\code{m}~\code{f}}{!~\code{m}~\code{.kont}~\code{A'}~\code{f}}}~\}
        \end{aligned}
    }}
  \end{align*}
  \end{footnotesize}
  \caption{Three Exception Monads}
  \label{fig:three-exception-monads}
\end{figure}

%
Next, we show that several other typical monads can be adapted to
CBPV.  For brevity, we give only the types of these monads. The full
details of their implementations are given in
\ifextended
\cref{sec:appendix:monad-implementations}.
\else
the extended version of the paper \cite{extended-version}.
\fi
%

We give the types of the two continuation monads \code{Kont} and
\code{PolyKont} and state monads \code{State} and \code{StateK} in
\cref{fig:mo_state_kont}. The first continuation monad \code{Kont} is
the one with fixed answer type \code{B}. The intuition is that a stack
for \code{Kont} is a \code{B} stack with a continuation on top. Since
\code{B} is fixed, the computation can interact with the stack before
invoking the continuation (if ever). For instance, in our Zydeco
implementation, we include a built-in computation type \code{OS} that
is used for computations that can make system calls by using a
continuation-passing style, and $\code{Kont}\,\code{OS}$ then serves
as an analog of Haskell's \code{IO} monad.

The second continuation monad \code{PolyKont} looks similar at first
to \code{Kont} but the result parameter \code{R} is this time
\emph{universally quantified}, as in the double-barreled continuation
example. Since the \code{R} is universally quantified, by
parametricity the computation cannot interact with the \code{R} stack
\emph{except} by calling the continuation.
For this reason, using parametric reasoning it can be proven that
\code{PolyKont} is equivalent to $\BRet$, i.e., it is a
Church-encoding of the $\BRet$ type
\cite{reynolds-1983,parametricity-eec}.
For this reason it is technically possible not to include $\BRet$ as a
primitive type, replacing its uses with \code{PolyKont}.

Next, we provide two versions of the state monad. One is a
``direct-style'' version, which takes in an input state and returns a
new state alongside the returned value. The second is a Church-encoded
version, where instead a continuation is passed in which expects the
state to be passed on the stack. 

\begin{figure}
  \centering
  \begin{footnotesize}
    \begin{align*}
    \code{Kont}~&\defeq~\SLam{(\code{B}: \CTy)~(\code{A}: \VTy)}
      \AThunk{(\code{A} \to \code{B})} \to \code{B}
    \\
    \code{PolyKont}~&\defeq~\SLam{(\code{A}: \VTy)}
      \BForall{(\code{R}: \CTy)}{\AThunk{(\code{A} \to \code{R})} \to \code{R}}
    \\
    \code{State}~&\defeq~\SLam{(\code{S}: \VTy)~(\code{A}: \VTy)}
      \code{S} \to \BRet{(\code{A} \times \code{S})}
    \\
    \code{StateK}~&\defeq~\SLam{(\code{S}: \VTy)~(\code{A}: \VTy)}
      \BForall{(\code{R}: \CTy)}{\AThunk{(\code{A} \to \code{S} \to \code{R})} \to \code{S} \to \code{R}}
  \end{align*}
  \end{footnotesize}
  \caption{Continuation and State Monads}
  \label{fig:mo_state_kont}
\end{figure}

Finally, we turn to a general class of monads, \emph{free monads}
generated by a collection of operations, in the sense of algebraic
effects \cite{plotkin-power-2001}.
Free monads give a way to take an arbitrary collection of desired
syntactic ``operations'' and provide a monad \code{T} that supports
those operations. Here each operation is specified by two value types:
a domain \code{A} and codomain \code{A'}. Then the free monad \code{T}
should be the minimal monad supporting for each operation a function
$\AThk{(\BArrow{\code{A}}{\code{T}~\code{A'}})}$.
For instance, a printing operation would be an operation
$\AThk{(\BArrow{\AString}{\code{T}~\AUnit})}$ or a random choice operation
would be $\AThk{(\BArrow{\AUnit}{\code{T}~\ABool})}$.
Typically free monads are constructed as a type of \emph{trees}
inductively generated by return as a leaf of the monad, and a node
constructor for each operation.
This construction could be carried out to construct a monad using
recursive value types, and then compose this with $\BRet$ to get a
relative monad. On the other hand, we can construct this directly as a
computation type using a similar Church-encoding trick to what we have
seen thus far.
For instance, here is the type constructor for a free monad supporting
print and random choice operations, which can obviously be generalized
to any number of operations:
\begin{footnotesize}
  \begin{align*}
  \code{KPrint}~&\defeq~\SLam{\code{R}: \CTy}{\AThunk{(\AString \to \code{Kont}~\code{R}~\AUnit)}} \\
  \code{KFlip}~&\defeq~\SLam{\code{R}: \CTy}{\AThunk{(\AUnit \to \code{Kont}~\code{R}~\ABool)}} \\
  \code{PrintFlip}~&\defeq~\SLam{\code{A}:\VTy}{\BForall{\code{R}: \CTy}{\code{KPrint}~\code{R} \to \code{KFlip}~\code{R} \to \code{Kont}~\code{R}~\code{A}}}
\end{align*}
\end{footnotesize}
An intuition for this construction is that a computation in the free
monad is one that takes in any ``interpretation'' of the operations as
acting on some result type \code{R} and an \code{A}-continuation for
\code{R} and returns an \code{R}. Again polymorphism ensures that all
such a computation can do is perform some sequence of operations and
return.
We provide the full definition of the monad structure for
\code{PrintFlip} as well as demonstrate that it supports \code{Print}
and \code{Flip} operations in
\ifextended
\cref{sec:appendix:monad-implementations}
\else
the extended version of the paper \cite{extended-version}.
\fi
.

\subsection{Laws}
Just as with monads, there are relative monad \emph{laws}.
\begin{definition}
  A relative monad implementation $\code{m} : \AThk{(\BMo~\code{T})}$ is \emph{lawful} if it satisfies the following equational principles:
  \begin{enumerate}
  \item \textbf{left unit law:} for any $\code{a}: \code{A}$ and $\code{f}: \AThunk{(\code{A} \to \code{T}~\code{A'})}$, \begin{footnotesize}\[
    \MForce{m~\code{.bind}}~\code{A}~\code{A'}~\VThunk{~\MForce{\code{m}~\code{.return}}~\code{A}~\code{a}~}~\code{f} \equiv \MForce{\code{f}}~\code{a}
  \]\end{footnotesize}
  \item \textbf{right unit law:} for any $\code{t}: \AThunk{(\code{T}~\code{A})}$, \begin{footnotesize}\[
    \MForce{\code{m}~\code{.bind}}~\code{A}~\code{A}~\code{t}~\VThunk{~\MForce{\code{m}~\code{.return}}~\code{A}~} \equiv \MForce{\code{t}}
  \]\end{footnotesize}
  \item \textbf{associativity law:} for any $\code{f}: \AThunk{(\code{A} \to \code{T}~\code{A'})}$, $g: \AThunk{(\code{A'} \to T~\code{A''})}$, and $t: \AThunk{(T~\code{A})}$, \begin{footnotesize}\[
    \MForce{\code{m}~\code{.bind}}~\code{A'}~\code{A''}~\VThunk{~\MForce{\code{m}~\code{.bind}}~\code{A}~\code{A'}~\code{t}~\code{f}~}~\code{g} \equiv \MForce{\code{m}~\code{.bind}}~\code{A}~\code{A'}~\code{t}~\VThunk{~\MTmLam{\code{x}}{\MForce{\code{m}~\code{.bind}}~\code{A'}~\code{A''}~\VThunk{~\MForce{\code{f}}~\code{x}~}~\code{g}}~}
  \]\end{footnotesize}
  \item \textbf{linearity law:} for any $\code{tt}: \AThunk{(\BReturn{(\AThunk{(\code{T}~\code{A})})})}$, \begin{footnotesize}\[
    \MBind{\code{t}}{\MForce{\code{tt}}}{\MForce{\code{m}~\code{.bind}}~\code{A}~\code{A'}~\code{t}} \equiv \MForce{\code{m}~\code{.bind}}~\code{A}~\code{A'}~\VThunk{~\MBind{\code{t}}{\MForce{\code{tt}}}{\MForce{\code{t}}}~}
  \]\end{footnotesize}
  \end{enumerate}  
\end{definition}
The first three of these are analogous to the ordinary monad laws. The
left unit law says that returning to a continuation constructed by
\code{bind} is the same as calling the continuation directly. The
right unit law says that constructing a continuation that always
returns is equivalent to using the ambient continuation. The
associativity law says that composing multiple continuations with bind
is associative.
The fourth, the linearity law, is a new law that is necessary in a
system like CBPV which may have ambient effects, as opposed to
ordinary monads which are defined in a ``pure'' language.
Intuitively, bind being linear states that it uses its first input
\emph{strictly} and never again. For instance, if bind simply pushes
something onto the stack and executes the thunk, then it will be
linear.
The precise formulation used here says that for \code{tm} a ``double
thunk'', it doesn't matter if we evaluate \code{tm} first, and bind
the resulting thunk or if we bind a thunk that will evaluate \code{tm}
each time it is called. This formulation is taken from prior work on
CBPV \cite{levy-2017,munch-linearity}.
One motivation for why the linearity rule is necessary is that we want
a relative monad to ``overload'' not just the \code{do} notation that
the primitive $\BRet$ type provides, we also want all of the
\emph{equational reasoning principles} that the $\BRet$ provides to
hold for relative monads. And the built-in $\code{do}$ for $\BRet$
simply pushes a continuation onto the stack and executes the input, so
it satisfies the linearity requirement.

Most of our example monads can be easily verified to satisfy the monad
laws, using CBPV $\beta\eta$ equality, and in the case of the
Church-encoded types, parametricity. However, the defunctionalized
exception monad implementation does not!
The reason is that by defunctionalizing the stack of continuations, we
expose low-level details and so ``non-standard'' computations can
inspect the stack and observe, for instance, the number of frames
pushed onto the stack.
As a result, while the left unit and linearity laws hold, the right
unit and associativity laws fail. We provide an explicit
counter-example in the \ifextended{appendix}\else{extended version}\fi which is implemented via stack-walking
and counts the number of stack frames that it encounters\ifextended{}\else{ \cite{extended-version}}\fi.


The root cause of this lawlessness is that problematic programs involve unrestricted manipulations on the stack frame,
yet a canonical instance of the exception monad, when executed, should only compute its result---either
an exception or a valid value---and react to one of the destructors correspondingly.
We henceforth propose a canonicity condition for these instances as shown below,
where \code{M} is an arbitrary computation of type $\code{Exn}~\code{E}~\code{A}$.
\begin{footnotesize}\begin{align*}
  \code{M} \equiv~&\MBind{\code{x}}{\code{M}}{} \\
           &\keyword{match}~\code{x} \\
           &|\,\,\,\code{Err}(\code{e}) \Rightarrow \MForce{\code{fail}~\code{e}} \\
           &|\,\,\,\code{Ok}(\code{a}) \Rightarrow \MForce{\code{return}~\code{a}}
\end{align*}\end{footnotesize}
Where $\code{fail}$ is defined analogously to \code{return}, but using invoking the closest \code{.try} rather than \code{.kont} frame.
We verify in
\ifextended
\cref{sec:appendix:monad-laws}
\else
the extended version of the paper \cite{extended-version}
\fi
that this condition
ensures the monad laws hold.
This is of course a strong restriction, and essentially ensures the
implementor of a \code{Exn E A} computation only interacts with the
monad by calling \code{fail}, \code{return}, and \code{bind}.

\section{Monadic Blocks and the Algebra Translation}
\label{sec:mo-blocks-and-alg-trans}

While relative monads provide a nice interface for user-defined
effects implemented on the stack, their ergonomics leave a bit to be
desired, as writing programs using explicit calls to \code{bind} leads
to a CPS-like programming style.
In Haskell, this is solved using the \code{do} notation which allows
programming in a kind of embedded CBV sub-language.
Then a natural question is if we can adapt the Haskell approach and
develop a corresponding \code{do} notation for CBPV.
Of course, there is already a primitive \code{do} notation,
which works for one particular relative monad: $\BRet$. Then the
analog of Haskell's \code{do} notation would be a way to
\emph{overload} the built-in \code{do} to work with an arbitrary
relative monad.

However we cannot simply desugar all uses of \code{do} in CBPV to the
\code{bind} of a relative monad. In the \code{bind} for a relative
monad \code{T}, the continuation for the bind has result type \code{T
  A}, but the built-in \code{do} allows for the result type of the
continuation to be an arbitrary computation type \code{B}.
This means for example that it is trivial to embed a computation with
type $\BRet{A}$ in a computation of a user-defined monad type: 
we just need to use the built-in $\code{do}$ with a continuation of
our monadic type.

At first look, then, it seems like an overloaded \code{do} would need
to be much more restrictive than the built-in one, only allowing
continuations with the same monadic type. Miraculously, this is not
the case! Any relative monad can be used to interpret code written
with \emph{unrestricted} CBPV \code{do} notation, using the primitive
\code{bind} when the continuation is of the form \code{T A}, but using
a different, type-dependent operation when the continuation has a
different type \code{B}. This type-dependent operation is called an
\emph{algebra} of the monad. We show in this section that almost all
type constructors in CBPV canonically induce an associated algebra for
an arbitrary relative monad, allowing for a type-directed translation
of unrestricted CBPV \code{do} notation. The exception are the
$\BRet$, which is instead reinterpreted in our algebra translation as
the user defined monad $T$, and quantification $\forall$, where such
quantification must be translated to additionally take in some
associated algebra or algebra-transformer for the quantified type.

\subsection{Algebras of Relative Monads}
\label{sec:algebras-of-relative-monads}

Intuitively, an algebra is a version of \code{.bind} that works with
continuations whose output computation type is not necessarily of the
form \code{T~A}. In terms of stacks, an algebra for a type \code{B} is
a way of composing \code{B} stacks with \code{T~A} continuations.
Since algebras generalize bind, they have analogous laws. The only one
that does not have an analog is the right unit law, which is specific
to monadic result type.
\begin{definition}
  Algebras for a monad are given by the type
  \begin{footnotesize}
    \[\BAlg~\defeq~\SLam{(\code{T}: \VTy \to \CTy)~(\code{B}: \CTy)}{\BForall{\code{A}: \VTy}{\AThunk{(\code{T}~\code{A})} \to \AThunk{(\code{A} \to \code{B})} \to \code{B}} }\]
  \end{footnotesize}
  An algebra $\code{alg}: \AThunk{\BAlg~\code{T}~\code{B}}$ is
    \emph{lawful} if it satisfies the following equational principles:
  \begin{enumerate}
  \item \textbf{left unit law:} \begin{footnotesize}\[
    \MForce{\code{alg}}~\code{A}~\VThunk{\MForce{\code{m}~\code{.return}}~\code{A}~\code{a}}~\code{f}
    \equiv \MForce{\code{f}}~\code{a}
  \]\end{footnotesize} for any $\code{a}: \code{A}$ and $\code{f}: \AThunk{(\code{A} \to \code{B})}$,
  \item \textbf{associativity law:} \begin{footnotesize}\[
    \MForce{\code{alg}}~\code{A'}~\VThunk{\MForce{\code{.bind}}~\code{A}~\code{A'}~\code{t}~\code{f}}~g
    \equiv \MForce{\code{alg}}~\code{A}~\code{t}~\VThunk{\MTmLam{\code{x}{\MForce{\code{alg}}~\code{A'}~\VThunk{\MForce{\code{f}}~\code{x}}~\code{g}}}
  \]\end{footnotesize} for any $\code{f}: \AThunk{(\code{A} \to \code{T}~{\code{A'}})}$, $\code{g}: \AThunk\code{A'} \to \code{B})}$, and $\code{t}: \AThunk{(\code{T}~\code{A})}$,
  \item \textbf{linearity law:} \begin{footnotesize}\[
    \MBind{\code{t}}{\MForce{\code{tt}}}{\MForce{\code{alg}}~\code{A}~\code{t}} \equiv \MForce{\code{alg}}~\code{A}~\VThunk{\MBind{\code{t}}{\MForce{\code{tt}}}{\MForce{\code{t}}}}
  \]\end{footnotesize} for any $\code{tt}: \AThunk{(\BReturn{(\AThunk{(\code{T}~\code{A})})})}$.
  \end{enumerate}
\end{definition}

Then as expected, the built-in CBPV \code{do} notation can be used to
define a built-in $\BRet$-algebra for \emph{all} computation types
\code{B} structure \code{Algebra Ret B}:
\begin{footnotesize}
\begin{align*}
\code{alg\_ret}:~\AThunk{(\BForall{\code{B}} \BAlgebra{\BRet}{\code{B}})}~\defeq~\{~\MTyLam{\code{B}}{\MTyLam{\code{Z}}{\MTmLam{\code{tz}~\code{k}}{
  \keyword{do}~\code{z}~\leftarrow~!~\code{tz};~!~\code{k}~\code{z}
}}}~\}
\end{align*}
\end{footnotesize}
Additionally, for every monad \code{T} we can define the free algebra \code{Algebra T (T A)} for any \code{A} using the \code{.bind} of
the monad.
\begin{footnotesize}
\begin{align*}
  \code{alg\_mo}:~\AThunk{(\BForall{\code{T}} \code{Monad}~\code{T} \to \BForall{\code{X}}\BAlgebra{\code{T}}{\code{X}})}~&\defeq~\{~\MTyLam{\code{T}}{\MTmLam{\code{m}}{\MTyLam{\code{X}}{ \MTyLam{\code{Z}}{\MTmLam{\code{tz}~\code{k}}{\MForce{\code{m}}~\code{.bind}~\code{tz}~\code{k}}}}
    }}~\}
\end{align*}
\end{footnotesize}
What's more, each of the type constructors comes
with a canonical way of lifting algebra structures, that is, if the
arguments are algebras, then the constructed type is as well. Here is
the constructor for $\to$, the others are included in the
\ifextended{appendix}\else{extended version of the paper \cite{extended-version}}\fi.
\begin{footnotesize}
\begin{align*}
  \code{alg\_arrow}:~&\AThunk{(\BForall{\code{T}~\code{A}~\code{B}} \BAlgebra{\code{T}}{\code{B}} \to \BAlgebra{\code{T}}{(\code{A} \to \code{B})})}~\defeq\\
  &\{~\MTyLam{\code{T}}\MTyLam{\code{T}~\code{A}~\code{B}}{\MTmLam{(\code{alg}: \AThunk{(\BAlgebra{\code{T}}{\code{B}})})}{ \\
      &\qquad\quad\MTyLam{\code{Z}}{\MTmLam{\code{tz}~\code{f}}{
        \MTmLam{\code{a}}{\MForce{\code{alg}}~\code{tz}~\VThunk{~\MTmLam{\code{z}}{\MForce{\code{f}}~\code{z}~\code{a}~}}}
      }}
    }}~\}
\end{align*}
\end{footnotesize}
This algebra works by popping the \code{A} value off of the stack, and
then invoking the algebra for \code{B} with a continuation that
passes \code{A} to the original continuation. The algebra lifting
for $\&,\nu$ are simple to define and provided in the
\ifextended{appendix}\else{extended version}\fi
. This means complex types like $A \to^+ B$ we defined earlier can have algebra structures derived in a type-directed manner.

Finally, consider the computation type connective $\forall$. It is not the case
that arbitrary $\forall$ quantified types can be given algebra
structure, for example, how would we construct an algebra for
$\BForall{\code{R}}{\AThunk{\code{R}} \to \code{R}}$? Instead what we can support is quantifiers that
quantify over \emph{both} a type, \emph{and} an algebra structure. For
example $\BForall{\code{R}}{\BAlgebra{\code{T}}{\code{R}} \to \AThunk{\code{R}} \to \code{R}}$, in which
case the algebra structure can be defined, using the passed in algebra
structure for $\code{R}$. This process becomes more complicated if $\code{R}$ is not
a computation type, but a higher-kinded type, in which case we need to
pass in not an algebra for $\code{R}$, but some kind of ``algebra
constructor'' whose structure is derived from the kind of $\code{R}$. We make
this precise in the next section.

\subsection{Monadic Blocks}
\label{sec:mo-blocks}

Monadic blocks are introduced in \cref{fig:syntax_monadic_block}
as a new term constructor for computations. Given a \emph{closed}
computation $M$, we can construct a computation $\MMonadic{M}$ which
is like the original $M$ except that it is \emph{polymorphic} in the
choice of underlying monad. That is, everywhere the original $M$ used
$\BRet$, $\MMonadic{M}$ will use the passed in monad
\code{T}. Additionally, as discussed in the previous section,
everywhere $M$ quantified over types, $\MMonadic{M}$ must quantify
over \emph{both} a type and a corresponding \emph{structure}. This
requires that the output of $M$ be translated to $\floor{M}$, which we
will describe shortly.

\begin{figure}
  \begin{footnotesize}
    \begin{mathpar}
  \judgment{
    \CtxtEmpty;\CtxtEmpty \vdash M : B
  }{
    \Delta;\Gamma \vdash \MMonadic{M} : \BForall{(\code{T}: \VTy \to \CTy)}{\AThunk{(\BMonad{\code{T}})} \to \flCar{B}}
  }{TyMoBlock}
\end{mathpar}
  \end{footnotesize}
\caption{Monadic Blocks}
\label{fig:syntax_monadic_block}
\end{figure}

It may seem overly limiting to require $M$ to be a \emph{closed}
computation. However some limitations on $M$ are necessary: $M$ cannot
for instance use arbitrary computation type variables, as it would
require that these support algebras of the given monad. In practice,
we can work around this restriction by making $M$ polymorphic in its
free type variables. This would make the translation type for $M$ also
be polymorphic, but quantifying over the required algebra structures
as well.

Next, we define the type $\floor{B}$ of a monadic block in
Figure~\ref{fig:signature_carrier_translations}. We call this the
``carrier translation'' as it takes every type in CBPV to the type of
``carriers'' for an algebraic structure that will be constructed
later.
We use dependently typed syntax to give a specification for the
well-typedness property of this translation, using curly braces in the
style of Agda to denote inferrable arguments.
The carrier translation takes in a type $\Delta \vdash S : K$ and
returns back a type of the same kind, but requiring that a type
constructor $\code{T}$ (corresponding to the monad) additionally be in scope.
For most types, the translation is simply homomorphic and so we elide
these for now and include them in the \ifextended{appendix}\else{extended version}\fi.
We show the interesting cases: $\BRet$ is translated to the provided
monad $\code{T}$, and $\forall,\exists$ are translated to require or provide
respectively a \emph{structure} $\SigP{K}{\code{X}}$ for the quantified type $\code{X}$.
The type $\SigP{K}{S}$ is defined next, it defines for each kind $K$ a
computation type which is the ``signature'' type for a structure that
will be required for all types $S$ of that kind, again with a monad $\code{T}$
assumed to be in scope.
For computation types, that structure is that of an algebra of the
monad. For value types, the structure is trivial, so we use the trivial computation type
$\BTop$. For function kinds, we need a corresponding ``structure
constructor'': given any structure on the input, we require structure
on the output.
Additionally, we extend the notion of signature from kinds to type
environments: given a type environment $\Delta$, we return a value
environment $\SigD{\Delta}$ of thunks of structures satisfying the
signatures for the corresponding variables in $\Delta$.


\begin{figure}
  \begin{footnotesize}
\begin{align*}
  \floor{\,\cdot\,}~&:~\forall~\Implicit{K}~\Implicit{\Delta} \to (\Delta \vdash \_ : K) \to (\code{T}: \VTy \to \CTy, \Delta \vdash \_ : K) \\
  \floor{\BRet}~&:=~\code{T} \\
  \floor{\BForall{\code{X}:K}{B}}~&:=~\BForall{\code{X}:K}{\AThunk{(\SigP{K}{\code{X}})} \to \floor{B}} \\
  \floor{\AExists{\code{X}:K}{A}}~&:=~\AExists{\code{X}:K}{\AThunk{(\SigP{K}{\code{X}})} \times \floor{A}} \\
  \dots
\end{align*}
\begin{align*}
  \textrm{Sig}~&:~\forall~\Implicit{K}~\Implicit{\Delta} \to (\code{T}: \VTy \to \CTy, \Delta \vdash K) \to (\code{T}: \VTy \to \CTy, \Delta \vdash \CTy) \\
  \SigP{\CTy}{B}~&:=~{\BAlgebra{\code{T}}{B}} \\
  \SigP{\VTy}{A}~&:=~{\BTop} \\
  \SigP{K_0 \to K}{S}~&:=~{\BForall{\code{X}:K_0}{\AThunk{(\SigP{K_0}{\code{X}})} \to \SigP{K}{S~\code{X}}}} \\
  \\
  \textrm{Sig}~&:~(\Delta : \textrm{TEnv}) \to \Delta \vdash \textrm{VEnv} \\
  \SigD{\,\cdot\,}~&:=~\cdot \\
  \SigD{\Delta, \code{X}:K}~&:=~\SigD{\Delta}, \code{str}_{\code{X}}: \AThunk{(\SigP{K}{\code{X}})}
\end{align*}
  \end{footnotesize}
\caption{Signature Translation and Carrier Translation}
\label{fig:signature_carrier_translations}
\end{figure}

\subsection{The Algebra Translation}
\label{sec:algebra-translation}

To implement monadic blocks, we define a translation from CBPV to CBPV
that translates all monadic blocks to ordinary CBPV code. This
proceeds in two phases: first there is a translation $\floor{M}$ that
translates a closed program that does \emph{not} use monadic blocks
into one that is polymorphic in the underlying monad. Second, we have
a translation $\brac{M}$ that uses $\floor{-}$ to translate away all
uses of monadic blocks. This is defined homomorphically on all
structures except monadic blocks which are translated as
\begin{footnotesize}
  \[ \brac{\MMonadic{M}}~:=~\MTyLam{(\code{T}: \VTy \to \CTy)}{ \MTmLam{(\code{m}: \AThunk{(\BMonad{T})})}{\floor{\brac{M}}}} \]
\end{footnotesize}

The term translation for values and computations is given in
Figure~\ref{fig:term_translations}.  The type-preservation property of
the translation is that given an \emph{open} value/computation, we
construct a value/computation whose output type is given by the
carrier translation, but under a context extended with a monad
\code{T} as well as structures for all of the free type variables
$\SigD{\Delta}$, with the free value variables translated according to
the carrier translation $\floor{\Gamma}$.
The term translation elaborates \keyword{ret} to the return of the
provided monad and elaborates \keyword{do}-bindings to use the algebra
constructed by the structure translation.
The $\forall$ and $\exists$ cases are translated to be keeping the
thunked structure around when instantiating the type variable $\code{X}$,
which is bound to $\code{str}_{\code{X}}$ whenever the $\code{X}$ is bound.

\begin{figure}
  \begin{footnotesize}
    \begin{align*}
  \floor{\,\cdot\,}~&:~\forall~\Implicit{A}~\Implicit{\Delta}~\Implicit{\Gamma} \\
                    &\,\,\,\,\to (~\Delta; \Gamma \vdash A~) \\
                    &\,\,\,\,\to (~T: \VTy \to \CTy, \Delta \\
                    &\qquad\,\,;~mo: \AThunk{(\BMonad{T})}, \SigD{\Delta}, \floor{\Gamma} \vdash \floor{A}~) \\
  \floor{x}~&:=~x \\
  \floor{\VPack{S}{V}}~&:=~\VPack{\floor{S}}{\VPair{\VThunk{\StrP{S}}}{\floor{V}}} \\
  \dots \\
  \\
  \floor{\,\cdot\,}~&:~\forall~\Implicit{B}~\Implicit{\Delta}~\Implicit{\Gamma} \\
                    &\,\,\,\,\to (~\Delta; \Gamma \vdash B~) \\
                    &\,\,\,\,\to (~\code{T}: \VTy \to \CTy, \Delta \\
                    &\qquad\,\,;~\code{m}: \AThunk{(\BMonad{T})}, \SigD{\Delta}, \floor{\Gamma} \vdash \floor{B}~) \\
  \floor{\MLetPack{\code{X}}{\code{x}}{V}{M}}~&:=~\MLetPack{\code{X}}{\code{p}}{\floor{V}}{\MLetPair{\code{str}_{\code{X}}}{\code{x}}{\code{p}}{\floor{M}}} \\
  \floor{\MReturn{V}}~&:=~\,\MForce{\code{m}}~\code{.return}~\floor{A}~\floor{V}
    \\ &\qquad\text{where}~\Delta;\Gamma \vdash V:A \\
  \floor{\MBind{\code{x}}{M_0}{M}}~&:=~\StrP{B}~\floor{A}~\VThunk{\floor{M_0}}~\VThunk{\MTmLam{\code{x}}{\floor{M}}}
    \\ &\qquad\text{where}~\Delta;\Gamma \vdash M_0:\BReturn{A}~\text{and}~\Delta;\Gamma \vdash M:B \\
  \floor{\MTyLam{\code{X}:K}{M}}~&:=~\MTyLam{\code{X}:K}{\MTmLam{\code{str}_{\code{X}}}\floor{M}} \\
  \floor{M~S}~&:=~\floor{M}~\floor{S}~\VThunk{\StrP{S}} \\
  \dots
\end{align*}
  \end{footnotesize}
\caption{Term Translations (Selected Cases)}
\label{fig:term_translations}
\end{figure}

The structure translation is given in
Figure~\ref{fig:structure_translation}.  The translation $\StrP{S}$
takes a type (constructor) $S:K$ and generates a computation of type
$\SigP{K}{\floor{S}}$ in the target language, scoped by the additional
monad type constructor $\code{T}$ and a monad instance $\code{m}$, as well as all
of the assumed structures for any free type variables. For value type
constructors, the resulting structure is defined trivially. Type
operators like $\AThk$ or $(\times)$ takes the type arguments and its
structure first before generating the algebra for the value or
computation type base cases.
As for the computation types, we use the lifting of algebras described
earlier in this section.  In our cases, the parameter $\code{tz} :
\AThunk{(\code{T}~\code{Z})}$ is the thunked monad value, and $\code{f}$ is the
continuation from $\code{Z}$ to the translated type.
The $\BForall{\code{X}:K}{B}$ case is a base computation type, so it starts
with parameters $\code{Z}$, $\code{tz}$ and $\code{f}$ as part of the algebra definition,
leaving a body of type $\floor{\BForall{\code{X}:K}{B}}$, which is
$\BForall{\code{X}:K}{{\AThunk(\SigP{K}{\code{X}})} \to \floor{B}}$. After
introducing the quantified type $\code{X}$ and its structure $\code{str}_{\code{X}}$ into
scope, we can now generate the algebra structure of $B$, pass in all
the arguments we received along the way, and use type application in
the continuation.  
The $\BNu{\code{Y}:K}{B}$ case is a bit more complicated, but the same idea
applies. We first introduce a fixed point $\code{str}$ for the generated
algebra structure, introduce the algebra parameters, and then
construct a term of the type $\floor{\BNu{\code{Y}:K}{B}}$ by using
\keyword{roll}. When we reach the recursive call $\StrP{B}$, we've
introduced both a abstract type variable $\code{Y}$ and a variable $\code{str}_{\code{Y}}$
for the the structure of $\code{Y}$. After substituting them with
$\BNu{\code{X}:K}{B}$ and $\code{str}_{\code{Y}}$ accordingly, we can pass in the arguments
as usual and finally \keyword{unroll} it in the continuation.  
At last we have the type variable, type constructor, and type
instantiation cases. The type variable case $\code{X}$ is a simple
lookup of $\code{str}_{\code{X}}$ in the value environment. In the
case for a type-level function $\lambda \code{X}. S$, the resulting
structure is a \emph{structure constructor}, taking in the type
variable $\code{X}$ and its associated structure
$\code{str}_{\code{X}}$ and producing a structure for $S$.  Then type
application case $S\,S_0$ supplies the structure of the argument type
$S_0$ to the structure constructor $S$.

\begin{figure}
  \begin{footnotesize}
    \begin{align*}
  Str~&:~\forall~\Implicit{K}~\Implicit{\Delta} \to (\Delta \vdash S: K) \\
      &\,\,\,\,\to (\code{T}: \VTy \to \CTy, \Delta;~\code{m}: \AThunk{(\BMonad{\code{T}})}, \SigD{\Delta} \vdash \SigP{K}{\floor{S}}) \\
  \StrP{\AThk}~&:=~\MTyLam{\code{X}}{\MTmLam{\_}{\MTop}} \\
  \StrP{(\times)}~&:=~\MTyLam{\code{X}}\MTmLam{\_}\MTyLam{\code{Y}}\MTmLam{\_}\MTop \\
  \dots \\
  \StrP{\BRet}~&:=~\MTyLam{\code{X}}{\MTmLam{\_}{\MTyLam{\code{Z}}{\MTmLam{\code{tz}~(\code{f}: \AThunk{(\code{Z} \to \code{T}~\code{X})})}{
    \\ &\qquad\,\,\,\,\,\,\MForce{\code{m}}~\code{.bind}~\code{Z}~\code{X}~\code{tz}~\code{f}
  }}}} \\
  \StrP{(\to)}~&:=~\MTyLam{\code{X}}{\MTmLam{\_}{\MTyLam{\code{Y}}{\MTmLam{\code{alg}_{\code{Y}}}{
    \MTyLam{\code{Z}}{\MTmLam{\code{tz}~(\code{f}: \AThunk{(\code{Z} \to \floor{\code{X}} \to \floor{\code{Y}})})}{
      \\ &\qquad\quad\MTmLam{(\code{x}: \floor{\code{X}})}{\MForce{\code{alg}_{\code{Y}}}~\code{Z}~\code{tz}~\VThunk{\MTmLam{\code{z}}{\MForce{\code{f}}~\code{z}~\code{x}}}}
    }}
  }}}} \\
  \StrP{(\&_D)}~&:=~\multi{\MTyLam{\code{X}_d}\MTmLam{\code{alg}_d}}_{d \in D}
    \MTyLam{\code{Z}}\MTmLam{\code{tz}~(\code{f}: \AThunk{(\code{Z} \to \BWith{.d:\floor{\code{X}_d}}{d \in D})})}
      \\ &\qquad\quad\MComatch
        {.d}
          {\MForce{\code{alg}_d}~\code{Z}~\code{tz}~\VThunk{\MTmLam{\code{z}}{\MForce{\code{f}}~\code{z}~.d}}}
        {d \in D}
  \\
  \StrP{\BForall{\code{X}:K}{B}}~&:=~\MTyLam{\code{Z}}{\MTmLam{\code{tz}~(\code{f}: \AThunk{(\code{Z} \to \BForall{\code{X}:K}{{\AThunk(\SigP{K}{\code{X}})} \to \floor{B}})})}{
    \\ &\qquad\quad\MTyLam{\code{X}:K}{\MTmLam{(\code{str}_{\code{X}}: \AThunk{(\SigP{K}{\code{X}})})}{ \StrP{B}~\code{Z}~\code{tz}~\VThunk{\MTmLam{\code{z}}{ \MForce{\code{f}}~\code{z}~\code{X}~\code{str}_{\code{X}} }}}}
  }} \\
  \StrP{\BNu{\code{Y}:K}{B}}~&:=~\MFix{(\code{str} : \AThunk{(\BAlgebra{\code{T}}{(\BNu{\code{Y}:K}{\floor{B}})})})}{
    \\ &\qquad\quad\MTyLam{\code{Z}}{\MTmLam{\code{tz}~(\code{f}:\AThunk{\code{Z} \to \BNu{\code{Y}:K}{\floor{B}}})}{
      \\ &\qquad\quad\quad\MRoll{
        \subst
          {\subst{\StrP{B}}{(\BNu{\code{Y}:K}{\floor{B}})}{\code{Y}}}
          {\code{str}}{\code{str}_{\code{Y}}}
          ~\code{Z}~\code{tz}~\VThunk{\MTmLam{\code{z}}{\MUnroll{\MForce{\code{f}}~\code{z}}}}
      }
    }
  }} \\
  \StrP{\code{X}}~&:=~\MForce{\code{str}_{\code{X}}} \\
  \StrP{\SLam{\code{X}:K}{S}}~&:=~\MTyLam{\code{X}}{\MTmLam{\code{str}_{\code{X}}}{\StrP{S}}} \\
  \StrP{S~S_0}~&:=~\StrP{S}~\floor{S_0}~\VThunk{\StrP{S_0}}
\end{align*}
  \end{footnotesize}
\caption{Structure Translation (Selected Cases)}
\label{fig:structure_translation}
\end{figure}




\subsection{Deriving Relative Monad Transformers}
\label{sec:monad-transformers}

In Haskell, many monads have a corresponding monad \emph{transformer}
\cite{monad-transformers,jaskelioff-2009}
which provides a way to ``add'' effects to an arbitrary input
monad. A monad transformer is a type constructor $t : (* \to *) \to (* \to *)$
such that if $m$ is a monad then $t~m$ is a monad and further there is
a polymorphic lifting operation $m~a \to t~m~a$ that preserves return
and bind. We think of $t~m$ as the monad $m$ extended with the effects
from $t$. The lifting operation allows us to embed $m$ effects in the
transformed monad.

Most common monads in Haskell extend to monad
transformers. That is, for most monads $m$ there is a corresponding monad
transformer $t$ such that $m$ is isomorphic to $t~\textrm{Identity}$.
For instance, the error monad extends to the error
monad transformer \code{EitherT e m a = (m (Either e a))} and the
state monad extends to the state monad transformer \code{StateT s m a = s -> m (a, s)}.
However this is not the case for all monads.
Most well known is the list monad, which models a kind of
quantitative, ordered non-determinism. We may na\"ively attempt to
extend this to a monad transformer in a similar manner to the previous
examples $\code{ListT m a = m [a]}$, but this fails to satisfy the
monad laws unless we further assume that \code{m} is a commutative
monad.

We are naturally led then to consider relative monad transformers in
CBPV, as these would have similar compositionality benefits. We define
relative monad transformers in \cref{fig:mt_def}. A relative monad
transformer is a structure on a type constructor $\code{F}: (\VTy \to \CTy) \to \VTy \to
\CTy$ such that for any relative monad $\code{T}$, $\code{F}~\code{T}$ is a relative monad,
and we support a $\code{lift}$ function that embeds $\code{T}~\code{A}$ computations
into $\code{F}~\code{T}~\code{A}$ computations. Then we say a monad $\code{S}$ extends to a monad
transformer $\code{F}$ if $\code{S}$ is isomorphic to $\code{F}~\BReturn$: $\BReturn$ being
the analog of the identity monad in Haskell.

The process of deriving a relative monad transformer from a relative
monad requires less insight than in the case of ordinary monads. For
instance, the sum-based exception monad $\code{Exn}~\code{E}~\code{A} =
\BReturn{(\ACoProdLit{\code{Err}:\code{E},~\code{Ok}:\code{A}})}$ is easily transformed into a relative
monad transformer by replacing $\BReturn$ with an arbitrary input monad $\code{T}$:
$\code{ExnT}~\code{T}~\code{E}~\code{A} = \code{T}~(\ACoProdLit{\code{Err}:\code{E},~\code{Ok}:\code{A}})$ and similarly for a
non-continuation based state monad $\code{StateT}~\code{T}~\code{S}~\code{A} = \code{S} \to
\code{T}~(\AProd{\code{S}}{\code{A}})$. We notice that this replacement of $\BReturn$ with
an arbitrary relative monad is reminiscent of our algebra
translation. In fact, our algebra translation can automate this
process entirely!

\begin{figure}
  \centering
  \begin{footnotesize}
    \begin{align*}
    \BMoTrans~&\defeq~\SLam{(\code{F}: (\VTy \to \CTy) \to \VTy \to \CTy)}~
    \\
    &\qquad\quad\BForall{(\code{T}: \VTy \to \CTy)} \AThunk{(\BMonad{\code{T}})} \to
    \\
    &\qquad\qquad\quad\begin{aligned}
      &   \&\{\,\,\code{.monad}~:~\BMonad{(\code{F}~\code{T})} \\
      &\quad\,\,\,\code{.lift}~:~\BForall{(\code{A}: \VTy)}{\AThunk{(\code{T}~\code{A})} \to \code{F}~\code{T}~\code{A}}\,\,\}
    \end{aligned}
  \end{align*}
  \end{footnotesize}
  \caption{Relative Monad Transformers}
  \label{fig:mt_def}
\end{figure}

\begin{figure}
  \begin{footnotesize}
  \textrm{Assume} $\cdot;\cdot\vdash M : \BMo~S$
    \begin{align*}
  \code{motrans}: \BMoTrans(\SLam{\code{T}}~\lfloor S \rfloor)~&\defeq~\MTyLam{\code{T}}\MTmLam{\code{m}}
    \MForce{\code{removeTriv}}~\{~\mathcal{M}( \\
      &~\qquad\keyword{let}~{\code{mo}_{\code{f}}} = \VThunk{~M~} ~\keyword{in}~\keyword{comatch} \\
      &~\qquad\quad|~\code{.monad} \Rightarrow \MForce{\code{mo}_{\code{f}}} \\
      &~\qquad\quad|~\code{.lift} \Rightarrow \MTyLam{\code{Z}}\MTmLam{\code{tz}}\MBind{\code{z}}{\MForce{\code{tz}}}{\MForce{\code{mo}_{\code{f}}}~\code{.return}~\code{Z}~\code{z}} \\
    &\quad\,\,\,)~\code{T}~\code{mo}~\}
  \\
  \code{removeTriv}~&\defeq~\MTmLam{\code{m}}
    \keyword{comatch} \\
    &~\quad\quad|~\code{.monad} \Rightarrow \keyword{comatch} \\
    &~\quad\quad\quad|~\code{.return} \Rightarrow \MTyLam{\code{A}}\MForce{\code{m}}~\code{.monad}~\code{.return}~\code{A}~\code{triv} \\
    &~\quad\quad\quad|~\code{.bind}\Rightarrow \MTyLam{\code{A}~\code{A'}} \MForce{\code{m}}~\code{.monad}~\code{.bind}~\code{A}~\code{triv}~\code{A'}~\code{triv} \\
    &~\quad\quad|~\code{.lift} \Rightarrow \MTyLam{\code{A}} \MForce{\code{m}}~\code{.lift}~\code{A}~\code{triv}
  \\
  \code{triv}~&\defeq~\VThunk{~\MTop~}
\end{align*}
  \end{footnotesize}
\caption{Deriving Monad Transformers from Monads}
\label{fig:deriving-monad-transformers}
\end{figure}

In \cref{fig:deriving-monad-transformers} we show how to use monadic
blocks to automatically derive monad transformers from a closed
implementation of a monad $M : \code{Monad}~S$. The corresponding type
constructor is given using the algebra translation $\lfloor S
\rfloor$. To implement the monadic and lift operations, we enter a
monadic block for the provided monad $T$ that is to be transformed.
Then inside the monadic block, we embed the monad implementation $M$.
This lift implementation inside the monadic block has the type
$\BForall{\code{A}}.~\AThk{(\BReturn{\code{A}})} \to S~\code{A}$, which is easily implemented
using the built-in $\keyword{do}$ notation for $\BRet$.  When we view
these two operations outside the monadic block, they almost correspond
exactly to the $\code{.monad}$ and $\code{.lift}$ type signatures,
except that quantification over value types additionally takes in
trivial structures. For this reason, to get something that meets the
monad transformer type signature exactly, we $\eta$-expand the
monad/lift definitions and fill in these trivial structures.

This mechanical process produces the expected monad transformer
definitions for $\code{ExnT}$ and $\code{StateT}$, but what about
continuation-based monads that don't directly mention $\BRet$ in
their definition? These cases involve quantification over a result
computation type, and in the monad transformer this is generalized to
take in an \emph{algebra} of the provided monad. For instance, for
\code{ExnK}, we derive the type of the monad transformer to be:
$\code{ExnKT}~\code{E}~\code{M}~\code{A} = \BForall{\code{R}} \BAlg~\code{M}~\code{R} \to \AThk{(\code{E} \to
  \code{R})} \to \AThk{(\code{A} \to \code{R})} \to \code{R}$.

How does this relate to the fact that Haskell's list monad does not
extend to a monad transformer? In CBPV, we can define a type of lists
of values using recursive value types, but we will be unable to define
a relative monad $\code{ListM}~\code{A} = \BReturn{(\code{List}~\code{A})}$ without
assuming that $\BRet$ satisfies commutativity. That is, we
encounter the same problem as in Haskell, but because all relative
monads extend to monad transformers, we already encounter the issue
when defining a relative monad.

\section{Fundamental Theorem of CBPV Relative Monads}
\label{sec:fund-thm}

In this section, we connect our development of relative monads in CBPV
to the category-theoretic notion of relative monad, and prove the
semantic analog of the algebra translation of the previous section, a
theorem we call the \emph{fundamental theorem of CBPV relative
monads}. While we have presented the syntactic translation first, we
note that we arrived at the mathematical version first and then
adapted this to our syntactic calculus.

In order to reduce the mathematical complexity of this section, we
will focus only on the computation type constructors in models of
simply typed CBPV, that is without considering polymorphism ($\forall$),
recursive types ($\nu$) or term-level fixed points. We expect more general
versions of the theorem are possible, such as a version for
extensional dependently typed CBPV models, but leave this to future work.

\subsection{Models of Call-by-push-value}


For simplicity we work with a very simple notion of a CBPV model where
value types are interpreted as sets. We discuss the relationship to other
models in Section~\ref{sec:discussion}. We start by defining a model
for only the \emph{judgments} (value/computation types and terms) of
CBPV without any of the type connectives.
\begin{definition}
  A judgmental CBPV model consists of
  \begin{enumerate}
  \item A universe of value objects, that is a (large) set $\mathcal V_0$ and a
    function $\el : \mathcal V_0 \to \Set$.
  \item A category $\mathcal E$ of computation objects and linear
    morphisms
  \item A functor $\mathcal C : \mathcal E \to \Set$ of computations.
  \end{enumerate}

  For any such model, we define a category $\mathcal V$ to have
  $\mathcal V_0$ as objects and as morphisms $f \in \mathcal V(A,A')$
  just functions $f : \el(A) \to \el(A')$. Further we define a
  profunctor $J : \mathcal V^{op} \times \mathcal E \to \Set$ by
  $J(A,B) = \el(A) \to \mathcal C(B)$. We will notate the functorial
  actions of this profunctor by composition $\circ$.
\end{definition}

A judgmental CBPV model can model the basic notions of our CBPV
calculus, but few of the type constructors. Value types $A$ are
interpreted as value objects $A \in \mathcal V_0$, value environments
$\Gamma = x_1:A_1,\ldots$ are interpreted as the Cartesian product
$\el(\Gamma) = \el(A_1) \times \cdots $ with the empty context
interpreted as a one-element set. Computation types are interpreted as
computation objects $B \in \mathcal E_0$. Then values $\Gamma \vdash V
: A$ are modeled as functions $V : \el(\Gamma) \to \el(A)$, and
computations $\Gamma \vdash M : B$ are interpreted as functions
$\el(\Gamma) \to \mathcal C(B)$.

One difference between the syntax and semantics is that the semantics
has primitive notions of pure morphism (function between value
objects) and linear morphism (morphism between computation objects),
whereas the syntax has a more restrictive notion of values and we only
give a primitive syntax for stacks in the operational semantics. The correct
interpretation is that the morphisms between value objects $\mathcal
V(A,A')$ correspond in the syntax to \emph{thunkable} terms $x:\AMV
\vdash M : \BReturn{\AMV'}$ and similarly that computation morphisms
$\mathcal E(B,B')$ correspond to \emph{linear} terms $z:\AThunk{\BMV}
\vdash M : \BMV'$, as described in \cite{munch-linearity}. In our
semantics, we will work with elements of $\mathcal E(B,B')$ where in
our syntax we worked with a linearity restriction.

In order to model the type and term constructors of CBPV, a judgmental
model must come with corresponding object and morphism constructors. A
convenient feature of our choice of model is that the function $\to$
and product $\&$ computation types can both be modeled as products:
\begin{definition}

    Let $I$ be a set and $B : I \to \mathcal E_0$ a family of
    computation objects. A product computation object is the categorical product of the family, i.e., it is a computation object
    $\with_{i : I} B(i)$ with a natural isomorphism $\mathcal E(B', \with_{i\in I} B(i))
    \cong \prod_{i \in I} \mathcal E(B', B(i))$.
\end{definition}
Then we define when a model of CBPV supports $\AThk$, $\BRet$, $\&$ and $\to$ as follows:
\begin{definition}
  \label{def:cbpv-model}
  A CBPV model is a judgmental CBPV model with the following
  structures:
  \begin{enumerate}
  \item (thunk objects) a functor $\AThk : \mathcal E \to \mathcal V$ and
    a natural isomorphism $\mathcal V(A,\AThk~B) \cong \mathcal J(A,B)$
  \item (returner objects) a functor $\BRet : \mathcal V \to \mathcal E$
    and a natural isomorphism $\mathcal E(\BRet~A,B) \cong \mathcal
    J(A,B)$.
  \item (nullary, binary computation products) All product computation
    objects indexed by $0$ and $2$.
  \item (computation function types), For any $A \in \mathcal V_0, B
    \in \mathcal E_0$ a computation product of the constant family
    $K(B) : \el(A) \to \mathcal E_0$.
  \end{enumerate}
\end{definition}

By composing natural isomorphisms we get $\mathcal E(\BRet~A, B) \cong
\mathcal V(A, \AThk~B)$, that is, $\AThk$ is a right adjoint functor to $\BRet$.

\subsection{Relative Monads, Algebras and the Fundamental Theorem}

Next we recount the standard definition of a relative monad in
category theory \cite{acu-2010}.
\begin{definition}
  Let $G : \mathcal X \to \mathcal Y$ be a functor. A monad
  relative to $G$ consists of
  \begin{enumerate}
  \item A function $T : \mathcal X_0 \to \mathcal Y_0$
  \item For each $x \in \mathcal X_0$, a \emph{unit} $\eta_x : \mathcal Y(Gx, Tx)$
  \item For each morphism $f \in \mathcal Y(Gx,Tx')$, an \emph{extension} $f^\dag \in \mathcal Y(Tx,Tx')$
  \item Satisfying $\eta_x^\dag = \id$, $f^\dag \circ \eta = f$ and $(f^\dag \circ g)^\dag = f^\dag \circ g^\dag$ for all $f,g$
  \end{enumerate}
\end{definition}

How is this related to our syntactic definition in
Section~\ref{sec:relative-monads-in-zydeco}? Our syntactic definition
corresponds to the following semantic definition:
\begin{definition}
  A \emph{CBPV relative monad} consists of
  \begin{enumerate}
  \item A function $T : \mathcal V_0 \to \mathcal E_0$
  \item For each $A \in \mathcal V_0$, an element $\eta_A : J(A, T A)$
  \item For each $A,A'\in \mathcal V_0$ a function $-^\dag : J(A, T
    A') \to \mathcal E(T A, T A')$
  \item Satisfying $\eta^\dag = \id$, $j^\dag \circ \eta = j$ and
    $(j^\dag \circ k)^\dag = j^\dag \circ k^\dag$ for all $j,k$.
  \end{enumerate}
\end{definition}
This has a fairly direct correspondence to our syntactic
notion. The syntactic version of return denotes an element of $\AThk(A \to T A')$. But this is equivalent to the unit of a relative monad because $\AThk$ preserves powering by $\el(A)$ because it is a right adjoint:
\begin{footnotesize}
  \[ \AThk(\with_{\_:\el(A)} T A') \cong \el(A)\to \el(\AThk(T A')) \cong \mathcal V(A, \AThk(T~A')) \cong \mathcal J(A,T A')\]
\end{footnotesize}
The $-^\dag$ operation then corresponds to the bind operation with
arguments reordered, with the linearity assumption of the bind
operation instead encoded by requiring that the $-^\dag$ operation
construct a morphism in $\mathcal E$. Note that because we are working
with a model where $\mathcal V$ is a subcategory of sets, this notion
of monads is automatically ``strong'' in that we can define a version of
$-^\dag$ that works in an arbitrary context $(\el(\Gamma) \to \mathcal
J(A, T A')) \to (\el(\Gamma) \to \mathcal E(T A, T A')$ as simply
$f^\dag(\gamma) = (f(\gamma))^\dag$.

How then does a CBPV relative monad relate to the standard notion? The
definitions seem quite similar, the only difference is that a CBPV
relative monad uses the profunctor $J$ whereas the standard notion
uses morphisms out of $Gx$. This formulation of relative monads in terms of a profunctor
has been identified in prior work \cite{arkor2024formal,levy-relative-monad}. Then we need only observe that by \cref{def:cbpv-model} in any CBPV model we are provided
natural isomorphism $J(A,B) \cong \mathcal E(\BReturn{A},B)$ and then the
following is a straightforward consequence.
\begin{lemma}
  A CBPV relative monad is equivalent to a monad relative to $\BRet$
\end{lemma}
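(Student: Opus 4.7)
The plan is to exhibit an explicit bijection between the two structures using the natural isomorphism $\phi_{A,B} : J(A,B) \cong \mathcal{E}(\BRet\,A, B)$ provided by \cref{def:cbpv-model}, and then verify that the laws on each side correspond under this bijection.

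First I would fix the data. Given a CBPV relative monad $(T,\eta,-^\dag)$, define a monad relative to $\BRet$ by keeping the same object assignment $T$, taking the unit to be $\tilde\eta_A \defeq \phi_{A,TA}(\eta_A) \in \mathcal{E}(\BRet\,A, TA)$, and defining the extension of $g \in \mathcal{E}(\BRet\,A, TA')$ by $\tilde g^\dag \defeq (\phi^{-1}_{A,TA'}(g))^\dag$. Conversely, given a monad $(T,\tilde\eta,-^{\tilde\dag})$ relative to $\BRet$, apply $\phi^{-1}$ pointwise to recover the CBPV-style data. It is immediate from $\phi$ being an isomorphism that these two passages are mutually inverse on data.

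Next I would verify that the equational laws transport correctly. For the unit law $\tilde\eta^{\tilde\dag} = \id$, one translates it through $\phi^{-1}$ and uses the definitions to reduce it to $\eta^\dag = \id$, which is precisely the CBPV relative monad axiom. For $\tilde g^{\tilde\dag} \circ \tilde\eta = \tilde g$, I would use naturality of $\phi$ in the first argument with respect to $\eta$ to rewrite both sides into the image under $\phi$ of $g^\dag \circ \eta = g$. The associativity law $(\tilde g^{\tilde\dag} \circ \tilde k)^{\tilde\dag} = \tilde g^{\tilde\dag} \circ \tilde k^{\tilde\dag}$ is the most delicate step: here one needs naturality of $\phi$ in the second argument to commute postcomposition with $\tilde g^{\tilde\dag}$ through $\phi^{-1}$, reducing the identity to the CBPV associativity axiom $(g^\dag \circ k)^\dag = g^\dag \circ k^\dag$.

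The main obstacle I anticipate is keeping the naturalities straight: $\phi$ is natural in both arguments, and each law uses naturality in a different argument in a subtle way. In particular, the associativity law requires that the functorial action of $J$ by postcomposition with morphisms in $\mathcal{E}$ agrees with the functorial action of $\mathcal{E}(\BRet\,A,-)$ under $\phi$, which is exactly what naturality in the second argument asserts. Once these naturality squares are recognized, each of the three laws reduces mechanically to its counterpart on the other side, so no further hypothesis is needed. The equivalence is then a direct consequence of $\phi$ being a natural isomorphism, with no additional structure required beyond what \cref{def:cbpv-model} already guarantees.
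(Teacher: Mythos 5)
Your proposal is correct and matches the paper's approach exactly: the paper gives no further argument for this lemma, stating only that it is a straightforward consequence of the natural isomorphism $J(A,B) \cong \mathcal{E}(\BRet\,A,B)$ supplied by \cref{def:cbpv-model}, and you have simply filled in the routine transport of the data and laws along that isomorphism. One small correction: the law $\tilde g^{\tilde\dag}\circ\tilde\eta=\tilde g$ also relies on naturality of $\phi$ in the \emph{second} ($\mathcal{E}$-valued) argument, since the composite post-composes $\eta$ with $j^\dag$; naturality in the first argument is never needed because no law pre-composes with a $\mathcal{V}$-morphism.
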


This equivalence allows us to adapt standard notions of algebra and
homomorphism to the CBPV setting.
\begin{definition}
  An \emph{algebra} for a CBPV monad $T$ consists of
  \begin{enumerate}
  \item A carrier object $B \in \mathcal E_0$
  \item For every $A$ a function $-^{\dag_B} : J(A, B) \to \mathcal E(T A,B)$
  \item Satisfying $j^{\dag_B} \circ \eta = j$ and $(j^{\dag_B} \circ k)^{\dag_B} = j^{\dag_B} \circ k^\dag$
  \end{enumerate}

  Given two algebras $(B,-^{\dag_B})$ and $(B',-^{\dag_B'})$, an
  algebra \emph{homomorphism} is a morphism $\phi : \mathcal E(B,B')$
  that preserves the extension operation $\phi(j^{\dag_B}) =
  \phi(j)^{\dag_B'}$. This defines a category $\Alg(T)$ with a
  forgetful functor $U : \Alg(T) \to \mathcal E$ that takes the
  carrier object of an algebra and the underling morphism of a
  homomorphism.
\end{definition}

Then the fundamental theorem of relative monads says that this
category of algebras can replace $\mathcal E$ and serve as a new model
of CBPV. To prove this we need only show that this category of
algebras can model the connectives of CBPV. Modeling the connectives
on value objects comes essentially for free as we will use the same
notion of value object. For the returner object, we use the \emph{free
algebra}, which is the semantic analog of the fact that every monad,
when instantiated is an algebra for itself:
\begin{definition}
  For a value object $A$, the free algebra is the $T$-algebra on $TA$
  defined by the $-^\dag$ of the monad itself. This has the property
  that algebra homomorphisms $\Alg(T)((TA,-^\dag), (B,-^{\dag_B}))$
  are naturally isomorphic to morphisms $J(A, B)$.
\end{definition}

The remaining computation object constructions, function objects and
nullary and binary products, are all given by computation products,
and the fact that $\Alg(T)$ has these products follows from the
following standard property of algebras of relative monads:
\begin{lemma}
  \label{lem:limits-of-algebras}
  The forgetful functor $U : \Alg(T) \to \mathcal E$ creates limits.
\end{lemma}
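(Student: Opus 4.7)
The plan is to unfold the definition of creating limits: given a diagram $D : \mathcal I \to \Alg(T)$ whose underlying diagram $U \circ D$ has a limit $(L, \pi_i)_{i \in \mathcal I}$ in $\mathcal E$, I will construct a unique $T$-algebra structure on $L$ that makes each $\pi_i$ into an algebra homomorphism, and then verify that the resulting cone is a limit in $\Alg(T)$.

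First I would define the extension operation on $L$. For each $j \in J(A, L)$, I need a morphism $j^{\dag_L} \in \mathcal E(TA, L)$. By the universal property of $L$, such a morphism is determined by a compatible family of morphisms $TA \to B_i$, where $B_i = U D(i)$. The natural candidate is the family $(\pi_i \circ j)^{\dag_{B_i}}$, and compatibility with a morphism $d : i \to i'$ in $\mathcal I$ follows because $U D(d)$ is an algebra homomorphism: $UD(d) \circ (\pi_i \circ j)^{\dag_{B_i}} = (UD(d) \circ \pi_i \circ j)^{\dag_{B_{i'}}} = (\pi_{i'} \circ j)^{\dag_{B_{i'}}}$, using the cone condition $UD(d) \circ \pi_i = \pi_{i'}$. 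So $j^{\dag_L}$ is well-defined by the equation $\pi_i \circ j^{\dag_L} = (\pi_i \circ j)^{\dag_{B_i}}$, which also immediately shows that each $\pi_i$ is an algebra homomorphism.

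Next I would verify the two algebra laws for $(L, -^{\dag_L})$. Since a morphism into $L$ is determined by its composites with the $\pi_i$, it suffices to post-compose each side of the law with $\pi_i$, use the defining equation for $-^{\dag_L}$, and then invoke the corresponding law for the algebra $(B_i, -^{\dag_{B_i}})$. For the unit law: $\pi_i \circ (\eta^{\dag_L}) = (\pi_i \circ \eta)^{\dag_{B_i}} = \eta^{\dag_{B_i}} \circ \text{(appropriate naturality)}$, and comparing with $\pi_i \circ \id_L = \pi_i$ gives the result after a small calculation. The associativity law is analogous.

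Finally I would check the universal property in $\Alg(T)$. Given a competing cone $(C, \psi_i)$ with each $\psi_i : C \to B_i$ an algebra homomorphism, the universal property of $L$ in $\mathcal E$ yields a unique $\phi : C \to L$ with $\pi_i \circ \phi = \psi_i$. It remains to show $\phi$ is a homomorphism, i.e., $\phi \circ j^{\dag_C} = (\phi \circ j)^{\dag_L}$ for all $j \in J(A, C)$; again I would post-compose with $\pi_i$ and use that both $\pi_i$ and $\psi_i$ are homomorphisms. Uniqueness of the lifted algebra structure on $L$ follows from the fact that the algebra laws and the requirement that each $\pi_i$ be a homomorphism force $\pi_i \circ j^{\dag_L} = (\pi_i \circ j)^{\dag_{B_i}}$, which by jointly monic-ness of the $\pi_i$ determines $-^{\dag_L}$.

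The main obstacle, though routine, is bookkeeping around the profunctor $J$ rather than a hom-set: one must repeatedly use that the universal property $\mathcal E(X, L) \cong \lim_i \mathcal E(X, B_i)$ can be pre-composed through $J(A, -)$ naturally, so that defining $j^{\dag_L}$ by its images under $\pi_i$ and verifying the algebra laws both reduce to facts about cones in $\mathcal E$. All other steps are standard unwinding of the definitions.
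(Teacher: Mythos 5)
The paper states this lemma without proof, citing it as a standard property, so there is no argument of record to compare against; your proposal supplies the standard creation-of-limits argument, and its overall shape --- define $j^{\dag_L}$ by $\pi_i \circ j^{\dag_L} = (\pi_i\circ j)^{\dag_{B_i}}$, using that the transition maps of the diagram are homomorphisms to get a compatible cone, then check the laws and the universal property by post-composing with the jointly monic $\pi_i$ --- is the right one.

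Two points need repair before this is airtight. First, your verification of the unit law does not typecheck: for a general algebra the law is $j^{\dag_L}\circ\eta = j$ for $j\in J(A,L)$, not $\eta^{\dag_L}=\id$; the expression $\eta^{\dag_L}$ is only meaningful when $L=TA$, since $\eta_A\in J(A,TA)$ while $-^{\dag_L}$ accepts inputs in $J(A,L)$. The intended argument (post-compose with $\pi_i$, use the defining equation and the unit law for each $B_i$) does go through for the correctly stated law. Second, that argument concludes an equality of \emph{elements of $J(A,L)$} from their images in the $J(A,B_i)$, so you need the canonical map $J(A,L)\to\prod_i J(A,B_i)$ to be injective. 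This is exactly the point you defer to ``bookkeeping around the profunctor $J$,'' and it is not automatic from the universal property of $L$, which is phrased only in terms of $\mathcal E(-,L)$. It does hold here because in a CBPV model $J(A,-)\cong\mathcal E(\BRet~A,-)$ is representable, so $J(A,-)$ preserves --- hence reflects --- the limit; alternatively, $\mathcal C\cong J(\AUnit,-)$ is itself representable. With that observation made explicit, and noting that the only limits actually needed for the fundamental theorem are the nullary and binary products and the powers of \cref{def:cbpv-model}, your proof is complete.
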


\begin{theorem}[Fundamental Theorem of CBPV Relative Monads]
  Given a CBPV relative monad $T$, we construct a new CBPV model with
  \begin{enumerate}
  \item $\mathcal V$ as in the original model, $\mathcal E = \Alg(T)$
    and $\mathcal C = U \circ \mathcal C$
  \item $\AThk = \AThk \circ U$, other value object connectives as in the original model
  \item $\BRet$ given by the free algebra, other computation connectives given by \cref{lem:limits-of-algebras}
  \end{enumerate}
\end{theorem}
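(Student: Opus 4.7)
The plan is to verify each clause of Definition~\ref{def:cbpv-model} for the proposed data. Since $\mathcal V$, $\el$, and all value-type constructors are inherited unchanged, the value-side obligations are immediate, and it suffices to check that $(\Alg(T), \mathcal C \circ U)$ forms a judgmental CBPV model and that the new $\AThk$, $\BRet$, computation products, and function objects each satisfy their characterizing natural isomorphisms.

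First I would observe that $\Alg(T)$ is a category because the homomorphism condition is closed under composition and identities, and that $\mathcal C \circ U$ is a functor by composition. The crucial identification, used throughout the rest of the argument, is
\[ J'(A, (B,\alpha)) \;=\; \el(A) \to \mathcal C(UB) \;=\; J(A, UB), \]
where $\alpha$ denotes the algebra structure. Given this, the thunk clause is immediate: since $\AThk' \defeq \AThk \circ U$, one chains the original adjunction with the above identification to obtain $\mathcal V(A, \AThk'(B,\alpha)) = \mathcal V(A,\AThk(UB)) \cong J(A,UB) = J'(A,(B,\alpha))$, with naturality inherited from $U$ and the original isomorphism.

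For the returner objects I would set $\BRet' A \defeq (TA, -^\dag)$, using the monad's own extension as its algebra structure. The required isomorphism $\Alg(T)(\BRet' A, (B,\alpha)) \cong J'(A,(B,\alpha))$ is precisely the universal property of the free algebra stated in Definition~\ref{def:cbpv-model} (returner clause, adapted), and functoriality of $\BRet'$ on a morphism $f : \mathcal V(A,A')$ follows by transporting $\eta_{A'} \circ f$ across this isomorphism, with the monad laws ensuring preservation of identities and composition. For the nullary, binary, and $\el(A)$-indexed product computation objects, I would invoke Lemma~\ref{lem:limits-of-algebras}: since $U$ creates limits and the corresponding products exist in $\mathcal E$ by hypothesis, their lifts to $\Alg(T)$ exist, and the universal property $\Alg(T)(B', \with_i B_i) \cong \prod_i \Alg(T)(B', B_i)$ transfers directly.

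The main obstacle I anticipate is the bookkeeping around Lemma~\ref{lem:limits-of-algebras}: one must verify concretely that the pointwise algebra structure on $\with_i U B_i$, defined by applying each $-^{\dag_{B_i}}$ componentwise and tupling, genuinely satisfies the algebra laws, and that the projections thereby become homomorphisms. Once this is established, the universal property transfers because any cone of homomorphisms into $\with_i B_i$ is uniquely determined by its underlying cone in $\mathcal E$, and commutation with the extension operator is checked componentwise. All remaining clauses reduce to naturality diagrams that follow routinely from the universal properties already established.
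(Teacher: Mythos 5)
Your proposal is correct and follows essentially the same route as the paper: inherit the value side, use the identification $J'(A,(B,\alpha)) = J(A,UB)$ coming from $\mathcal C' = \mathcal C \circ U$ to transfer the thunk adjunction, take the free algebra for $\BRet$ with its universal property, and obtain the computation products from the fact that $U : \Alg(T) \to \mathcal E$ creates limits (\cref{lem:limits-of-algebras}). The only difference is that you spell out slightly more of the bookkeeping (functoriality of the free-algebra assignment, the componentwise verification inside the limit-creation lemma) than the paper, which treats these as standard.
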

Our algebra translation is essentially a syntactic version of this
construction: interpreting the syntax of CBPV into a new model where
computation types carry an algebra structure.

\section{Discussion and Future Work}
\label{sec:discussion}

\paragraph{Programming with CBPV}

Our polymorphic CBPV language and semantics is based on Levy's
original CBPV syntax and stack-machine semantics
\cite{levy01:phd}. \citet{beyond-polarity} further
discussed how different calling conventions can be modeled as
conservative extensions to Levy's CBPV through the shifts between
positive and negative types, which is similar to our use of CBPV for
encoding stack representations.

\paragraph{Categorical models of CBPV}
\label{sec:models-of-cbpv}

Our notion of CBPV model, in which values form a subcategory of sets
is seemingly much more restrictive than Levy's original formulation of
adjunction models or Curien, Fiore and Munch-Maccagnoni's (CFMM)
simplification\cite{levy-stack-2005, cfm-2016}. CFMM model
value types in an arbitrary Cartesian category, and model computation
types in a category \emph{enriched} in presheaves over the
category of values. We can view a CFMM model as one of our
models by interpreting ``set'' using the \emph{internal language} of
presheaves over value types. Then the value types do form a universe
(the representable presheaves), and the category of computation types
and linear morphisms can be viewed as an ordinary category in
this internal language. Then as long as we stick to constructive
arguments (which we have done in this work), any mathematics that we
do can be interpreted in presheaf categories. We refer interested
readers to existing resources on the interpretation of mathematics in
the internal language of presheaf categories \cite{maietti-2005}.




\paragraph{Compilation}
\label{sec:verified-compilation}

In this work, we have used CBPV as a setting that captures the essence
of stack-manipulation. In future work, we aim to study how to take
advantage of these mechanisms for abstraction over effects in the
setting of compiler intermediate representations. Several works have
related CBPV and similar polarized calculi to compiler intermediate
representations \cite{gmrz-2018,downen16,call-by-unboxed-value}. The
computation type structure bears similarities to the stack typing of
Stack-based Typed Assembly Language (STAL) \cite{mcgw-2003} and their
use of stack types to describe the implementation of exceptions was a
major influence on this work. We go beyond their double-barreled
continuation example to show that \emph{recursive} computation types
(in STAL, stack types) can be used to encode stack-walking
implementations as well. We aim to investigate if the relative monad
abstraction and relative monad transformers that we have identified
could be used to develop compositional methods for implementation and
verification of stack-based implementations of effects.

\paragraph{Alternative Implementations of the Algebra Translation}
\label{sec:implementing-algebras}

The algebra translation for monadic blocks is carefully crafted to
work exactly with the simple $F_\omega$-style of CBPV calculus that we
have presented here, but we discuss other programming language
features that would allow for different implementations.
If our language supported \emph{sigma kinds}, we could unify the
signature and carrier translations by translating the kind $\CTy$ to
the sigma kind $\sum_{\code{B} : \CTy} \AThk~(\code{Alg}~\code{T}~\code{B})$. This would have
the benefit that value types would not require any structure to be
defined for them as we could simply translate $\VTy$ to $\VTy$ without
requiring any structure, removing the need for the trivial $\BTop$
structures to be defined or passed around. A downside is that to
support full overloadability we would need to translate these sigma
kinds as well, and that the type system as a whole would be more
complex.
It also may be possible to use Haskell-style typeclasses for algebras
of a relative monad and use a metaprogramming tool like Haskell's
\code{deriving} mechanism to implement our algebra translation.

An avenue for future work would be to extend our algebra translation
to dependently typed CBPV, which would allow us to build the monad and
algebra laws into the types. This would be closely related to the
``weaning translation'' for $\partial$-CBPV
\cite{pedrottabareau17,pedrot-tabareau-2019}. In particular, the
weaning translation is an adaptation of the algebra semantics of CBPV
for an ordinary monad to the setting of intensional type theory, so
our relative monad algebra semantics could possibly be combined to
generalize this to a ``relative weaning translation''.

\subsection{Relative Comonads}

An obvious direction for future work would be to consider relative
\emph{comonads} in CBPV, which would dually be type constructors
$\CTy \to \VTy$ with extra structure, with $\AThk$ being the
prototypical example in the same way that $\BRet$ is the prototypical
relative monad. In the same way that relative monads abstract the
structure of continuations implemented on the stack to allow for
additional effects, relative comonads would abstract the structure
of \emph{closures} to allow for additional \emph{coeffects}.


\begin{acks}
  The authors would like to thank Tingting Ding, Chengjun Peng, Nathan Varner, Yuxuan Xia, and Eric Zhao for their contributions to the Zydeco implementation.
\end{acks}


\newpage
\section{Data-Availability Statement}

An implementation of the Zydeco language that closely resembles the
\cbpv{} calculus is available in a Zenodo repository \cite{zydeco}.
The implementation consists of a parser, a bidirectional type checker,
and a stack-based environment-capturing interpreter. The implementation
is written in Rust and includes all examples of relative monads,
monadic blocks, and algebra translation mentioned in this paper.


\bibliography{bib}

\ifextended

\newpage
\appendix

\section{Monad Implementations}
\label{sec:appendix:monad-implementations}

We demonstrate the implementation of the continuation monad, it
polymorphic version, the state monad, and its continuation-passing
version in \cref{fig:mo_cont_state_impl}.

\begin{figure}
  \begin{align*}
    \code{mkont}~&\defeq~\{~\MTyLam{(B:\CTy)}\code{comatch} \\
    &\qquad\quad\begin{aligned}
      &|~\code{.return}~\Rightarrow~\MTyLam{A}\MTmLam{a} \\
        &\qquad\MTmLam{k}{\MForce{k}~a} \\
      &|~\code{.bind}~\Rightarrow~\MTyLam{A~A'}\MTmLam{m~f} \\
        &\qquad\MTmLam{k}\MForce{m}~\VThunk{~\MTmLam{a}{\MForce{f}~a~k}~}~\}
    \end{aligned}
    \\
    \code{mpolykont}~&\defeq~\{~\code{comatch} \\
    &\qquad\quad\begin{aligned}
      &|~\code{.return}~\Rightarrow~\MTyLam{A}\MTmLam{a} \\
        &\qquad\MTyLam{R}\MTmLam{k}{\MForce{k}~a} \\
      &|~\code{.bind}~\Rightarrow~\MTyLam{A~A'}\MTmLam{m~f} \\
        &\qquad\MTyLam{R}\MTmLam{k}\MForce{m}~\VThunk{~\MTmLam{a}{\MForce{f}~a~k}~}~\}
    \end{aligned}
    \\
    \code{mstate}~&\defeq~\{~\MTyLam{S}\code{comatch} \\
    &\qquad\quad\begin{aligned}
      &|~\code{.return}~\Rightarrow~\MTyLam{A}\MTmLam{a} \\
        &\qquad\MTmLam{s}\MReturn{\VPair{a}{s}} \\
      &|~\code{.bind}~\Rightarrow~\MTyLam{A~A'}\MTmLam{m~f} \\
        &\qquad\MTmLam{s}\MBind{\VPair{a'}{s'}}{\MForce{m}~s}{
          \MForce{f}~a'~s'
        }~\}
    \end{aligned}
    \\
    \code{mstatekont}~&\defeq~\{~\MTyLam{S}\code{comatch} \\
    &\qquad\quad\begin{aligned}
      &|~\code{.return}~\Rightarrow~\MTyLam{A}\MTmLam{a} \\
        &\qquad\MTyLam{R}\MTmLam{k~s}{\MForce{k}~a~s} \\
      &|~\code{.bind}~\Rightarrow~\MTyLam{A~A'}\MTmLam{m~f} \\
        &\qquad\MTyLam{R}\MTmLam{k~s}\MForce{m}~R~\VThunk{~\MTmLam{a~s'}{\MForce{f}~a~R~k~s'}~}~s~\}
    \end{aligned}
    \\
  \end{align*}
  \caption{Implementation of the Continuation and State Monads in \cbpv{}}
  \label{fig:mo_cont_state_impl}
\end{figure}

In \cref{fig:mo_free}, \cref{fig:free_abs_app}, and
\cref{fig:main_free}, we demonstrate an example of the free monad in
\cbpv{}, defining the operations as user interface and handlers as
implementations, and show how to use use them in an example program.

\begin{figure}
  \begin{align*}
    \code{KPrint}~&\defeq~\SLam{R: \CTy}{\AThunk{(\AString \to \code{Kont}~R~\AUnit)}} \\
    \code{KFlip}~&\defeq~\SLam{R: \CTy}{\AThunk{(\AUnit \to \code{Kont}~R~\ABool)}} \\
    \code{PrintFlip}~&\defeq~\SLam{A}{\BForall{R: \CTy}{\code{KPrint}~R \to \code{KFlip}~R \to \code{Kont}~R~A}}
  \end{align*}
  \begin{align*}
    \code{mfree}~&\defeq~\{~\keyword{comatch} \\
      &\quad\quad\begin{aligned}
        &|~\code{.return}~\Rightarrow~\MTyLam{A}{\MTmLam{a}{ \\
          &\qquad\MTyLam{R}{\MTmLam{print~flip}{
            \MTmLam{k}{\MForce{k}~a}~}}}} \\
        &|~\code{.bind}~\Rightarrow~\MTyLam{A~A'}{\MTmLam{m~f}{ \\
          &\qquad\MTyLam{R}{\MTmLam{print~flip}{ \\
            &\qquad\quad\MTmLam{k'}{
              \MForce{m}~R~print~flip \\
            &\qquad\quad\quad\{~\MTmLam{a}{\MForce{f}~a~R~print~flip~k'}~\}}
          }}
        }}~\}
      \end{aligned}
  \end{align*}
  \caption{Example of the Free Monad in \cbpv{}}
  \label{fig:mo_free}
\end{figure}

\begin{figure}
  \begin{align*}
    \code{print}~&\defeq~\{~\MTmLam{s}{\MTyLam{R}{\MTmLam{print~flip}{\MForce{print}~s}}}~\} \\
    \code{flip}~&\defeq~\{~\MTmLam{\_}{\MTyLam{R}{\MTmLam{print~flip}{\MForce{flip}~\VUnit}}}~\} \\
    \code{print\_os}~&\defeq~\{~\MTmLam{s}{\MTmLam{k}{\MForce{\code{write\_line}}~s~\{~\MForce{k}~\VUnit~\}}}~\} \\
    \code{flip\_os}~&\defeq~\{~\MTmLam{\_}{\MTmLam{k}{\MForce{\code{random\_int}}~\VThunk{ \\
      &\qquad\quad\MTmLam{i}{\MBind{j}{\MForce{\code{mod}}~i~2}{\MBind{b}{\MForce{\code{int\_eq}}~j~0}{\MForce{k}~b}}}}}}~\}
  \end{align*}
  \caption{Operations and Handlers for a Free Monad}
  \label{fig:free_abs_app}
\end{figure}

\begin{figure}
  \begin{align*}
    \code{main}~&~\defeq~\{~\MLet{m}{\{
        \\ &\qquad\quad~\MForce{\code{mfree}}~\code{.bind}~\ABool~\AUnit~\VThunk{\MForce{\code{flip}}~\VUnit}~\{~\MTmLam{b}{}
        \\ &\qquad\quad~\MForce{\code{if}}~(\code{PrintFlip}~\AUnit)~b~\VThunk{\MForce{\code{print}}~\code{"+"}}~\VThunk{\MForce{\code{print}}~\code{"-"}}
        \\ &\qquad\quad~\}\}}{\MForce{m}~\BOS~\code{print\_os}~\code{flip\_os}~\VThunk{\MTmLam{\_}{\MForce{\VHalt}}~}
      }
    \}
  \end{align*}
  \caption{Example Program using the Free Monad in \cbpv{}}
  \label{fig:main_free}
\end{figure}

\section{Monad Laws}
\label{sec:appendix:monad-laws}

We provide a counter-example to the right unit monad law for the
defunctionalized continuation monad in
Figure~\ref{fig:exn_kont_counter}.
\begin{figure}
  \begin{align*}
    &      \code{count\_kont}~\defeq~\{~\MFix{\code{count\_kont}}{ \\
      &\quad\MTyLam{(E: \VTy)~(A: \VTy)}{\MTmLam{(i: \AInt)~(fi: \AThunk{(\AInt \to \code{ExnDe}~E~A)})}{\keyword{comatch} \\
      &\quad\qquad\begin{aligned}
        &|~\code{.try}~\Rightarrow \MTyLam{E'}{\MTmLam{ke}{!~\code{count\_kont}~E'~A~i
          ~\VThunk{~\MTmLam{i}{!~fi~i~\code{.try}~E'~ke}~}}} \\
        &|~\code{.kont}~\Rightarrow \MTyLam{A'}{\MTmLam{ka}{
          ~\keyword{do}~i'~\leftarrow~!~\code{add}~i~1; \\
          &\quad!~\code{count\_kont}~E~A'~i'
          ~\VThunk{~\MTmLam{i}{!~fi~i~\code{.kont}~A'~ka}~}
        }} \\
        &|~\code{.done}~\Rightarrow~\MForce{fi~i~\code{.done}}~\}
      \end{aligned}
    }}}
  \end{align*}

  \begin{align*}
    \code{ExnCounter}~&\defeq~\code{ExnDe}~\AUnit~\AInt \\
    \code{bench}~&\defeq~\{~\MTmLam{(impl: \AThunk{(\AThunk{\code{ExnCounter}} \to \code{ExnCounter})})}{ \\
      &\qquad\quad\begin{aligned}
        &\keyword{do}~i?~\leftarrow~!~\code{impl}~\{ \\
        &\quad!~\code{count\_kont}~\AUnit~\AInt~0 \\
        &\quad\{~!~\code{mexnde}~\AUnit~\code{.return}~\AInt~\} \\
        &\}~\code{.done}; \\
        &\keyword{match}~i?
        ~|\,\,\,\code{Err}(e) \Rightarrow \keyword{ret}~(-1)
        ~|\,\,\,\code{Ok}(i) \Rightarrow \keyword{ret}~i~\}
      \end{aligned}
    }
  \end{align*}
  \caption{Counter Example Counter-Example}
  \label{fig:exn_kont_counter}
\end{figure}

We verify the monad laws on the exception monads satisfying the
canonicity condition.  
The right unit law is satisfied without the extra conditions:
\begin{align*}
         &\MForce{m}~\code{.bind}~A~A'~\VThunk{\MForce{m}~\code{.return}~A~a}~f \\
  \equiv~&\MForce{m~\code{.return}}~A~a~\code{.kont}~A'~f \\
  \equiv~&\MForce{f}~a
\end{align*}
And the left unit law can be verified if we apply the canonicity condition on $\MForce{m}$.
\begin{align*}
       &\MForce{m~\code{.bind}}~A~A~m~\VThunk{\MForce{m~\code{.return}}~A} \\
\equiv~&\MForce{m}~\code{.kont}~A~\VThunk{\MForce{m~\code{.return}}~A} \\
\equiv~&\MBind{x}{\MForce{m}}{} \\
       &\keyword{match}~x \\
       &|\,\,\,\code{Err}(e) \to \MForce{\code{fail}}~e \\
       &|\,\,\,\code{Ok}(a) \to \MForce{m~\code{.return}}~A~a \\
       &\code{.kont}~A~\VThunk{\MForce{m~\code{.return}}~A} \\
\equiv~&\MBind{x}{\MForce{m}}{} \\
       &\keyword{match}~x \\
       &|\,\,\,\code{Err}(e) \to \MForce{\code{fail}}~e~\code{.kont}~A~\VThunk{\MForce{m~\code{.return}}~A} \\
       &|\,\,\,\code{Ok}(a) \to \MForce{m~\code{.return}}~A~a~\code{.kont}~A~\VThunk{\MForce{m~\code{.return}}~A} \\
\equiv~&\MBind{x}{\MForce{m}}{} \\
       &\keyword{match}~x \\
       &|\,\,\,\code{Err}(e) \to \MForce{\code{fail}}~e \\
       &|\,\,\,\code{Ok}(a) \to \MForce{m~\code{.return}}~A~a \\
\equiv~&\MForce{m}
\end{align*}
As for the associativity law, we observe
\begin{align*}
\MForce{m~\code{.bind}}~A'~A''~\VThunk{\MForce{m~\code{.bind}}~A~A'~m~f}~g~&\equiv~\MForce{m}~\code{.kont}~A'~f~\code{.kont}~A''~g \\
\MForce{m~\code{.bind}}~A~A'~m~\VThunk{\MTmLam{x}{\MForce{m~\code{.bind}}~A'~A''~\VThunk{\MForce{f}~x}~g}}~&\equiv~\MForce{m}~\code{.kont}~A''~\VThunk{\MTmLam{x}{\MForce{f}~x~\code{.kont}~A''~g}}
\end{align*}
By applying the canonicity condition on $\MForce{m}$ and using the
same reasoning as in the left unit law, we have for the left side
\begin{align*}
       &\MForce{m}~\code{.kont}~A'~f~\code{.kont}~A''~g \\
\equiv~&\MBind{x}{\MForce{m}}{} \\
       &\keyword{match}~x \\
       &|\,\,\,\code{Err}(e) \to \MForce{\code{fail}}~e \\
       &|\,\,\,\code{Ok}(a) \to \MForce{m~\code{.return}}~A~a~\code{.kont}~A'~f~\code{.kont}~A''~g \\
\equiv~&\MBind{x}{\MForce{m}}{} \\
       &\keyword{match}~x \\
       &|\,\,\,\code{Err}(e) \to \MForce{\code{fail}}~e \\
       &|\,\,\,\code{Ok}(a) \to \MForce{f}~a~\code{.kont}~A''~g \\
\end{align*}
and the right side
\begin{align*}
       &\MForce{m}~\code{.kont}~A''~\VThunk{\MTmLam{x}{\MForce{f}~x~\code{.kont}~A''~g}} \\
\equiv~&\MBind{x}{\MForce{m}}{} \\
       &\keyword{match}~x \\
       &|\,\,\,\code{Err}(e) \to \MForce{\code{fail}}~e \\
       &|\,\,\,\code{Ok}(a) \to \MForce{m~\code{.return}}~A~a~\code{.kont}~A''~\VThunk{\MTmLam{x}{\MForce{f}~x~\code{.kont}~A''~g}} \\
\equiv~&\MBind{x}{\MForce{m}}{} \\
       &\keyword{match}~x \\
       &|\,\,\,\code{Err}(e) \to \MForce{\code{fail}}~e \\
       &|\,\,\,\code{Ok}(a) \to \MForce{f}~a~\code{.kont}~A''~g \\
\end{align*}

Finally, the linearity law trivially holds because $M$ is syntactically restricted
to be used for only once in the computation.
The only surrounding computation of $M$ is a do-binding, which preserves linearity.
Therefore, we conclude that all four monad laws are satisfied with the canonicity property.

\section{Algebra Translation}
\label{sec:appendix:algebra-translatiosn}

We give the full algebra translation composed of seven translations.
\begin{enumerate}
  \item Signature Translation: see
        \cref{fig:appendix:signature-translation}.
  \item Type Environment Signature Translation: see
        \cref{fig:appendix:signature-translation}.
  \item Carrier Translation: see
        \cref{fig:appendix:carrier-translation}.
  \item Value Environment Translation: see
        \cref{fig:appendix:carrier-translation}.
  \item Structure Translation: see
        \cref{fig:appendix:structure-translation}.
  \item Term Translations: see
        \cref{fig:appendix:term-translation-value} and
        \cref{fig:appendix:term-translation-computation}.
  \item Monadic Block Translation: see
        \cref{fig:appendix:monadic-block-translation}.
\end{enumerate}

\begin{figure}
\begin{align*}
  \textrm{Sig}~&:~\forall~\Implicit{K}~\Implicit{\Delta} \to (\code{T}: \VTy \to \CTy, \Delta \vdash K) \to (T: \VTy \to \CTy, \Delta \vdash \CTy) \\
  \SigP{\VTy}{A}~&:=~{\BTop} \\
  \SigP{\CTy}{B}~&:=~{\BAlgebra{T}{B}} \\
  \SigP{K_0 \to K}{S}~&:=~{\BForall{X:K_0}{\AThunk{(\SigP{K_0}{X})} \to \SigP{K}{S~X}}} \\
  \\
  \textrm{Sig}~&:~(\Delta : \textrm{TEnv}) \to \Delta \vdash \textrm{VEnv} \\
  \SigD{\,\cdot\,}~&:=~\cdot \\
  \SigD{\Delta, X:K}~&:=~\SigD{\Delta}, \code{str}_{\code{X}}: \AThunk{(\SigP{K}{X})}
\end{align*}
\caption{Signature Translation and Type Environment Signature Translation}
\label{fig:appendix:signature-translation}
\end{figure}

\begin{figure}
\begin{align*}
  \floor{\,\cdot\,}~&:~\forall~\Implicit{K}~\Implicit{\Delta} \to (\Delta \vdash K) \to (T: \VTy \to \CTy, \Delta \vdash K) \\
  \floor{\AThk}~&:=~\AThk \\
  \floor{\AUnit}~&:=~\AUnit \\
  \floor{(\times)}~&:=~(\times) \\
  \floor{(+)}~&:=~(+) \\
  \floor{\AExists{X:K}{A}}~&:=~\AExists{X:K}{\AThunk{(\SigP{K}{X})} \times \floor{A}} \\
  \floor{\BRet}~&:=~T \\
  \floor{(\to)}~&:=~(\to) \\
  \floor{(\&_D)}~&:=~(\&_D) \\
  \floor{\BForall{X:K}{B}}~&:=~\BForall{X:K}{\AThunk{(\SigP{K}{X})} \to \floor{B}} \\
  \floor{X}~&:=~X \\
  \floor{\SLam{X:K}{S}}~&:=~\SLam{X:K}{\floor{S}} \\
  \floor{S~S_0}~&:=~\floor{S}~\floor{S_0} \\
  \floor{\BNu{Y:K}{B}}~&:=~\BNu{Y:K}{\floor{B}} \\
  \\
  \floor{\,\cdot\,}~&:~VEnv \to VEnv \\
  \floor{\,\cdot\,}~&:=~\cdot \\
  \floor{\Gamma, x: A}~&:=~\floor{\Gamma}, x: \floor{A}
\end{align*}
\caption{Carrier Translation and Value Environment Translation}
\label{fig:appendix:carrier-translation}
\end{figure}

\begin{figure}
\begin{align*}
  Str~&:~\forall~\Implicit{K}~\Implicit{\Delta} \to (\Delta \vdash S: K) \\
      &\,\,\,\,\to (T: \VTy \to \CTy, \Delta;~mo: \AThunk{(\BMonad{T})}, \SigD{\Delta} \vdash \SigP{K}{\floor{S}}) \\
  \StrP{\AThk}~&:=~\MTyLam{X}{\MTmLam{\_}{\MTop}} \\
  \StrP{\AUnit}~&:=~\MTop \\
  \StrP{(\times)}~&:=~\MTyLam{X}\MTmLam{\_}\MTyLam{Y}\MTmLam{\_}\MTop \\
  \StrP{(+_C)}~&:=~\multi{\MTyLam{X}\MTmLam{\_}}_{c \in C}~\MTop \\
  \StrP{\AExists{X:K}{A}}~&:=~\MTop \\
  \StrP{\BRet}~&:=~\MTyLam{X}{\MTmLam{\_}{\MTyLam{Z}{\MTmLam{tz~(f: \AThunk{(Z \to T~X)})}{
    \\ &\qquad\,\,\,\,\,\,\MForce{\code{m}}~\code{.bind}~Z~X~tz~f
  }}}} \\
  \StrP{(\to)}~&:=~\MTyLam{X}{\MTmLam{\_}{\MTyLam{Y}{\MTmLam{alg_Y}{
    \MTyLam{Z}{\MTmLam{tz~(f: \AThunk{(Z \to \floor{X} \to \floor{Y})})}{
      \\ &\qquad\quad\MTmLam{(x: \floor{X})}{\MForce{alg_Y}~Z~tz~\VThunk{\MTmLam{z}{\MForce{f}~z~x}}}
    }}
  }}}} \\
  \StrP{(\&_D)}~&:=~\multi{\MTyLam{X_d}\MTmLam{alg_d}}_{d \in D}
    \MTyLam{Z}\MTmLam{tz~(f: \AThunk{(Z \to \BWith{.d:\floor{X_d}}{d \in D})})}
      \\ &\qquad\quad\MComatch
        {.d}
          {\MForce{alg_d}~Z~tz~\VThunk{\MTmLam{z}{\MForce{f}~z~.d}}}
        {d \in D}
  \\
  \StrP{\BForall{X:K}{B}}~&:=~\MTyLam{Z}{\MTmLam{tz~(f: \AThunk{(Z \to \BForall{X:K}{{\AThunk(\SigP{K}{X})} \to \floor{B}})})}{
    \\ &\qquad\quad\MTyLam{X:K}{\MTmLam{(\code{str}_{\code{X}}: \AThunk{(\SigP{K}{X})})}{ \StrP{B}~Z~tz~\VThunk{\MTmLam{z}{ \MForce{f}~z~X~\code{str}_{\code{X}} }} }}
  }} \\
  \StrP{\BNu{Y:K}{B}}~&:=~\MFix{(str : \AThunk{(\BAlgebra{T}{(\BNu{Y:K}{\floor{B}})})})}{
    \\ &\qquad\quad\MTyLam{Z}{\MTmLam{tz~(f:\AThunk{Z \to \BNu{Y:K}{\floor{B}}})}{
      \\ &\qquad\quad\quad\MRoll{
        \subst
          {\subst{\StrP{B}}{(\BNu{Y:K}{\floor{B}})}{Y}}
          {str}{str_Y}
          ~Z~tz~\VThunk{\MTmLam{z}{\MUnroll{\MForce{f}~z~}}}
      }
    }
  }} \\
  \StrP{X}~&:=~\MForce{\code{str}_{\code{X}}} \\
  \StrP{\SLam{X:K}{S}}~&:=~\MTyLam{X}{\MTmLam{\code{str}_{\code{X}}}{\StrP{S}}} \\
  \StrP{S~S_0}~&:=~\StrP{S}~\floor{S_0}~\VThunk{\StrP{S_0}}
\end{align*}
\caption{Structure Translation}
\label{fig:appendix:structure-translation}
\end{figure}

\begin{figure}
\begin{align*}
  \floor{\,\cdot\,}~&:~\forall~\Implicit{A}~\Implicit{\Delta}~\Implicit{\Gamma} \\
                    &\,\,\,\,\to (~\Delta; \Gamma \vdash A~) \\
                    &\,\,\,\,\to (~T: \VTy \to \CTy, \Delta \\
                    &\qquad\,\,;~mo: \AThunk{(\BMonad{T})}, \SigD{\Delta}, \floor{\Gamma} \vdash \floor{A}~) \\
  \floor{x}~&:=~x \\
  \floor{\VThunk{M}}~&:=~\VThunk{\floor{M}} \\
  \floor{\VUnit}~&:=~\VUnit \\
  \floor{\VPair{V_0}{V_1}}~&:=~\VPair{\floor{V_0}}{\floor{V_1}} \\
  \floor{\VInj{c}{V}}~&:=~\VInj{c}{\floor{V}} \\
  \floor{\VPack{S}{V}}~&:=~\VPack{\floor{S}}{\VPair{\VThunk{\StrP{S}}}{\floor{V}}}
\end{align*}
\caption{Term Translation (Values)}
\label{fig:appendix:term-translation-value}
\end{figure}

\begin{figure}
\begin{align*}
  \floor{\,\cdot\,}~&:~\forall~\Implicit{B}~\Implicit{\Delta}~\Implicit{\Gamma} \\
                    &\,\,\,\,\to (~\Delta; \Gamma \vdash B~) \\
                    &\,\,\,\,\to (~T: \VTy \to \CTy, \Delta \\
                    &\qquad\,\,;~mo: \AThunk{(\BMonad{T})}, \SigD{\Delta}, \floor{\Gamma} \vdash \floor{B}~) \\
  \floor{\MForce{V}}~&:=~\MForce{\floor{V}} \\
  \floor{\MLetPair{x_0}{x_1}{V}{M}}~&:=~\MLetPair{x_0}{x_1}{\floor{V}}{\floor{M}} \\
  \floor{\MMatch{V}{\Inj{c}(x_c)}{M_c}{c \in C}}~&:=~\MMatch{\floor{V}}{\Inj{c}(x_c)}{\floor{M_c}}{c \in C} \\
  \floor{\MLetPack{X}{x}{V}{M}}~&:=~\MLetPack{X}{p}{\floor{V}}{\MLetPair{\code{str}_{\code{X}}}{x}{p}{\floor{M}}} \\
  \floor{\MReturn{V}}~&:=~\,\MForce{\code{m}}~\code{.return}~\floor{A}~\floor{V}
    \\ &\qquad\text{where}~\Delta;\Gamma \vdash V:A \\
  \floor{\MBind{x}{M_0}{M}}~&:=~\StrP{B}~\floor{A}~\VThunk{\floor{M_0}}~\VThunk{\MTmLam{x}{\floor{M}}}
    \\ &\qquad\text{where}~\Delta;\Gamma \vdash M_0:\BReturn{A}~\text{and}~\Delta;\Gamma \vdash M:B \\
  \floor{\MTmLam{x:A}{M}}~&:=~\MTmLam{x:\floor{A}}{\floor{M}} \\
  \floor{M~V}~&:=~\floor{M}~\floor{V} \\
  \floor{\MTop}~&:=~\MTop \\
  \floor{\MComatch{.d}{M_d}{d \in D}}~&:=~\MComatch{.d}{\floor{M_d}}{d \in D} \\
  \floor{M~.i}~&:=~\floor{M}~.i \\
  \floor{\MTyLam{X:K}{M}}~&:=~\MTyLam{X:K}{\MTmLam{\code{str}_{\code{X}}}\floor{M}} \\
  \floor{M~S}~&:=~\floor{M}~\floor{S}~\VThunk{\StrP{S}} \\
  \floor{\MRoll{M}}~&:=~\MRoll{\floor{M}} \\
  \floor{\MUnroll{M}}~&:=~\MUnroll{\floor{M}} \\
  \floor{\MFix{x}{M}}~&:=~\MFix{x}{\floor{M}}
\end{align*}
\caption{Term Translation (Computations)}
\label{fig:appendix:term-translation-computation}
\end{figure}

\begin{figure}
\begin{align*}
  \brac{\,\cdot\,}~&:~\forall~\Implicit{A}~\Implicit{\Delta}~\Implicit{\Gamma}
    \to (\Delta; \Gamma \vdash A)
    \to (\Delta; \Gamma \vdash A) \\
  \brac{x}~&:=~x \\
  \brac{\VThunk{M}}~&:=~\VThunk{\brac{M}} \\
  \brac{\VUnit}~&:=~\VUnit \\
  \brac{\VPair{V_0}{V_1}}~&:=~\VPair{\brac{V_0}}{\brac{V_1}} \\
  \brac{\VInj{c}{V}}~&:=~\VInj{c}{\brac{V}} \\
  \brac{\VPack{S}{V}}~&:=~\VPack{S}{\brac{V}} \\
  \\
  \brac{\,\cdot\,}~&:~\forall~\Implicit{B}~\Implicit{\Delta}~\Implicit{\Gamma}
    \to (\Delta; \Gamma \vdash B)
    \to (\Delta; \Gamma \vdash B) \\
  \brac{\MMonadic{M}}~&:=~\MTyLam{(T: \VTy \to \CTy)}{ \\
    &~\quad\quad\,\, \MTmLam{(mo: \AThunk{(\BMonad{T})})}{\floor{\brac{M}}}} \\
  \brac{\MForce{V}}~&:=~\MForce{\brac{V}} \\
  \brac{\MLetPair{x_0}{x_1}{V}{M}}~&:=~\MLetPair{x_0}{x_1}{\brac{V}}{\brac{M}} \\
  \brac{\MMatch{V}{\Inj{c}(x_c)}{M_c}{c \in C}}~&:=~\MMatch{\brac{V}}{\Inj{c}(x_c)}{\brac{M_c}}{c \in C} \\
  \brac{\MLetPack{X}{x}{V}{M}}~&:=~\MLetPack{X}{x}{\brac{V}}{\brac{M}} \\
  \brac{\MReturn{V}}~&:=~\MReturn{\brac{V}} \\
  \brac{\MBind{x}{M_0}{M}}~&:=~\MBind{x}{\brac{M_0}}{\brac{M}} \\
  \brac{\MTmLam{x}{M}}~&:=~\MTmLam{x}{\brac{M}} \\
  \brac{M~V}~&:=~\brac{M}~\brac{V} \\
  \brac{\MTop}~&:=~\MTop \\
  \brac{\MComatch{.d}{M_d}{d \in D}}~&:=~\MComatch{.d}{\brac{M_d}}{d \in D} \\
  \brac{M~.i}~&:=~\brac{M}~.i \\
  \brac{\MTyLam{X:K}{M}}~&:=~\MTyLam{X}{\brac{M}} \\
  \brac{M~S}~&:=~\brac{M}~S \\
  \brac{\MRoll{M}}~&:=~\MRoll{\brac{M}} \\
  \brac{\MUnroll{M}}~&:=~\MUnroll{\brac{M}} \\
  \brac{\MFix{x}{M}}~&:=~\MFix{x}{\brac{M}}
\end{align*}
\caption{Monadic Block Translation}
\label{fig:appendix:monadic-block-translation}
\end{figure}

\newpage~
\newpage~
\newpage~
\newpage~

\section{Algebras Generated by The Algebra Translation are Lawful}

In this section we show that the algebra structures generated by the
algebra translation obey the algebra laws. The translation in concern
is the term translation, defined in
\cref{fig:appendix:term-translation-value} and
\cref{fig:appendix:term-translation-computation}. In order to prove
all algebras we construct are lawful, we need to generalize the
inductive hypothesis and prove all structures satisfy some ``laws'':
\begin{enumerate}
  \item For computation types, the structure is an algebra and should obey the algebra laws;
  \item For value types, the structure is $\BTop$ and so no laws need to be verified.
  \item For type constructors $K \to K'$, we need to verify that when
    instantiated with a lawful structure for $K$, we produce a lawful
    structure for $K'$.
\end{enumerate}

We first talk about value types and type constructors.  
The value types trivially obey the laws since there is no law to obey.  
For type constructors, since the lawful structure is passed in through
a function parameter, we only care about the cases when a computation
type is returned.  
Therefore, we only need to verify that the laws hold for computation types.

\begin{lemma}
  The algebra generated by the algebra translation is lawful under
  CBPV $\beta\eta$ equality. Namely, the following equational
  principles hold for all type environment $\Delta$, monad type
  constructor $\Delta \vdash T: \VTy \to \CTy$, and computation
  type $\Delta \vdash B: \CTy$:
  \begin{enumerate}
    \item \textbf{left unit law:} \[
      \StrP{B}~A~\VThunk{\MForce{m~\code{.return}}~A~a}~f \equiv \MForce{f}~a
    \] for any $a: \AMV$ and $f: \AThunk{(\AMV \to \SigP{\CTy}{\BMV})}$,
    \item \textbf{associativity law:} \[
      \StrP{B}~A'~\VThunk{\MForce{m~\code{.bind}}~A~A'~t~f}~g \equiv \StrP{B}~A~t~\VThunk{\MTmLam{x}{\StrP{B}~A'~\VThunk{\MForce{f}~x}~g}}
    \] for any $f: \AThunk{(\AMV \to T~\AMV')}$, $g: \AThunk{(\AMV' \to \SigP{\CTy}{\BMV})}$, and $t: \AThunk{(T~\AMV)}$,
    \item \textbf{linearity law:} \[
      \MBind{t}{\MForce{tt}}{\StrP{B}~A~t} \equiv \StrP{B}~A~\VThunk{\MBind{t}{\MForce{tt}}{\MForce{t}}}
    \] for any $tt: \AThunk{(\BReturn{(\AThunk{(T~\AMV)})})}$.
  \end{enumerate}
\end{lemma}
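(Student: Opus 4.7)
The plan is to proceed by structural induction on the type $S$, but with an inductive hypothesis that is simultaneously strengthened to cover types of every kind, not just $\CTy$. Concretely, I would prove: (a) for computation types $B$, the structure $\StrP{B}$ is a lawful algebra; (b) for value types the claim is vacuous; and (c) for a type constructor of kind $K_0 \to K$, $\StrP{S}$ sends any lawful input structure to a lawful output structure. Before beginning the induction I would extend the theorem statement to include the standing assumption that every algebra variable $\code{str}_X \in \SigD{\Delta}$ denotes a lawful structure. Under that assumption the type-variable case $\MForce{\code{str}_X}$ is immediate, the abstraction case $\SLam{X:K}{S}$ just discharges a hypothesis before invoking the IH on $S$, and the application case $S~S_0$ feeds the hypothesis for $S_0$ into the lawful constructor produced by the hypothesis for $S$.

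For the computation type-constructor cases the verification is a direct $\beta\eta$-calculation from the definitions in \cref{fig:structure_translation}. The $\BRet$ case reduces to the three monad laws for $\code{m}$, since $\StrP{\BRet}$ is literally $\MForce{\code{m}}~\code{.bind}$. For $\BArrow{A}{B'}$, $\BWith{B_d}{d \in D}$, and $\BForall{X:K}{B'}$, the algebra is built pointwise over the stack-frame argument, destructor, or quantified type-and-structure pair, so each law $\beta$-reduces under those arguments to an instance of the corresponding law on the sub-algebra $\StrP{B'}$ or $\StrP{B_d}$, available by IH. These calculations mirror the standard fact that limits of algebras are algebras, the semantic counterpart of which is \cref{lem:limits-of-algebras}.

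The hard part will be the recursive computation type $\BNu{Y:K}{B'}$, because $\StrP{\BNu{Y:K}{B'}}$ is defined by a term-level $\MFix$ that ties the knot on the algebra itself, substituting this very fixed point into the position of $\code{str}_Y$ inside $\StrP{B'}$. A naive appeal to the IH on $B'$ is invalid: it would require that the fixed-point variable already denote a lawful structure, which is the property we are trying to establish. The plan here is to unfold the $\MFix$ to expose one $\MRoll$ layer, invoke the outer IH on $B'$ under the extended assumption that $\code{str}_Y$ is lawful (justified coinductively, since the recursive call only appears under $\MRoll$), and then collapse the exposed $\MRoll{\MUnroll{-}}$ using $\nu$-$\eta$. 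Making this rigorous either requires a guarded/coinductive principle for terms of $\nu$-type in CBPV, or, more cleanly, a semantic detour through \cref{sec:fund-thm}: one can construct the algebra for $\BNu{Y:K}{B'}$ as an honest limit in $\Alg(T)$ via \cref{lem:limits-of-algebras}, verify the laws there, and then transport the result back syntactically. The latter route also explains the apparently odd design choice to take the fixed point at the algebra type rather than at the carrier.
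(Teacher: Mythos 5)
Your proposal matches the paper's proof: the paper likewise strengthens the induction hypothesis across all kinds (lawful algebras for $\CTy$, vacuously for $\VTy$, lawfulness-preserving structure constructors for arrow kinds, with every $\code{str}_X$ in $\SigD{\Delta}$ assumed lawful), and verifies each computation-type case by the same $\beta\eta$-calculations, reducing the $\BRet$ case to the monad laws of $\code{m}$ and the $\to$, $\&$, $\forall$ cases to the induction hypothesis on their sub-algebras. The only divergence is the $\BNu{Y}{B}$ case, where the paper simply unfolds the fixed point, invokes the induction hypothesis on $B$ with $\code{str}_Y$ bound to that very fixed point, and collapses $\MRoll{\MUnroll{-}}$ --- i.e.\ it silently performs exactly the step whose circularity you flag, without the explicit guarded/coinductive justification or the semantic detour through $\Alg(T)$ that you propose.
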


\begin{proof}
By induction on the cases of structure translation.

We start from the left unit law:
\begin{enumerate}
  \item \textbf{Case} $\StrP{\BReturn{A_0}}$ for $A_0: \VTy$: by applying the left unit law of monad $T$
    \begin{align*}
      &\StrP{\BReturn{A_0}}~A~\VThunk{\MForce{m~\code{.return}}~A~a}~f \\
      \equiv~&\MForce{m~\code{.bind}}~A~A_0~\VThunk{\MForce{m~\code{.return}}~A~a}~f \\
      \equiv~&\MForce{f}~a
    \end{align*}
  \item \textbf{Case} $\StrP{A_0 \to B}$ for $A_0: \VTy$ and $B: \CTy$: by induction hypothesis on $B$
    \begin{align*}
      &\StrP{A_0 \to B}~A~\VThunk{\MForce{m~\code{.return}}~A~a}~f \\
      \equiv~&\MTmLam{a_0}{\StrP{A_0 \to B}~A~\VThunk{\MForce{m~\code{.return}}~A~a}~f~a_0} \\
      \equiv~&\MTmLam{a_0}{\StrP{B}~A~\VThunk{\MForce{m~\code{.return}}~A~a}~\VThunk{\MTmLam{x}{\MForce{f}~x~a_0}}} \\
      \equiv~&\MTmLam{a_0}{(\MTmLam{x}{\MForce{f}~x~a_0})~a} \\
      \equiv~&\MTmLam{a_0}{\MForce{f}~a~a_0} \\
      \equiv~&\MForce{f}~a
    \end{align*}
  \item \textbf{Case} $\StrP{\BWith{B_d}{d \in D}}$ for $\multi{B_d: \CTy}_{d \in D}$: by induction hypothesis on $\multi{B_d}_{d \in D}$
    \begin{align*}
      &\StrP{\BWith{B_d}{d \in D}}~A~\VThunk{\MForce{m~\code{.return}}~A~a}~f \\
      \equiv~&\MComatch{.d}
        {\StrP{B_d}~A~\VThunk{\MForce{m~\code{.return}}~A~a}~\VThunk{\MTmLam{x}{\MForce{f}~x~.d}}}
      {d \in D} \\
      \equiv~&\MComatch{.d}{\MForce{f}~a~.d}{d \in D} \\
      \equiv~&\MForce{f}~a
    \end{align*}
  \item \textbf{Case} $\StrP{\BForall{X:K}{B}}$ for all $K$ and $B: \CTy$: by induction hypothesis on $B$
    \begin{align*}
      &\StrP{\BForall{X:K}{B}}~A~\VThunk{\MForce{m~\code{.return}}~A~a}~f \\
      \equiv~&\MTyLam{X}{\MTmLam{\code{str}_{\code{X}}}{\StrP{B}~A~\VThunk{\MForce{m~\code{.return}}~A~a}~\VThunk{\MTmLam{x}{\MForce{f}~x~X~\code{str}_{\code{X}}}}}} \\
      \equiv~&\MTyLam{X}{\MTmLam{\code{str}_{\code{X}}}{\MForce{f}~a~X~\code{str}_{\code{X}}}} \\
      \equiv~&\MForce{f}~a
    \end{align*}
  \item \textbf{Case} $\StrP{\BNu{Y:\CTy}{B}}$ for $B: \CTy$: \newline
    let $str_\nu$ be $\VThunk{\StrP{\BNu{Y:\CTy}{B}}}$ \newline
    by induction hypothesis on $B$
    \begin{align*}
      &\StrP{\BNu{Y:\CTy}{B}}~A~\VThunk{\MForce{m~\code{.return}}~A~a}~f \\
      \equiv~&\MRoll{
        \subst
          {\subst{\StrP{B}}{B_\nu}{Y}}
          {str_\nu}{str_Y}
          ~A~\VThunk{\MForce{m~\code{.return}}~A~a}~\VThunk{\MTmLam{x}{\MUnroll{\MForce{f}~x~}}}
      } \\
      \equiv~&\MRoll{(\MTmLam{x}{\MUnroll{\MForce{f}~x~}})~a} \\
      \equiv~&\MRoll{\MUnroll{\MForce{f}~a~}} \\
      \equiv~&\MForce{f}~a
    \end{align*}
  \item \textbf{Case} $\StrP{X}$ for $X: \CTy$: since the structure is
    from the value environment, it's lawful by the induction hypothesis.
  \item \textbf{Case} $\StrP{(\SLam{X:K}{B})~S}$ for $S: K$ and $B: \CTy$: by induction hypothesis on $S$ and $B$
    \begin{align*}
      &\StrP{(\SLam{X:K}{B})~S}~A~\VThunk{\MForce{m~\code{.return}}~A~a}~f \\
      \equiv~&(\MTyLam{X}{\MTmLam{\code{str}_{\code{X}}}{\StrP{B}}})~\floor{S}~\VThunk{\StrP{S}}~A~\VThunk{\MForce{m~\code{.return}}~A~a}~f \\
      \equiv~&\subst{\subst{\StrP{B}}{\floor{S}}{X}}{\VThunk{\StrP{S}}}{\code{str}_{\code{X}}}~A~\VThunk{\MForce{m~\code{.return}}~A~a}~f \\
      \equiv~&\MForce{f}~a
    \end{align*}
\end{enumerate}

Then we move on to the associativity law:
\begin{enumerate}
  \item \textbf{Case} $\StrP{\BReturn{A_0}}$ for all $A_0: \VTy$: by applying the associativity law of monad $T$
    \begin{align*}
      &\StrP{\BReturn{A_0}}~A'~\VThunk{\MForce{m~\code{.bind}}~A~A'~m~f}~g \\
      \equiv~&\MForce{m~\code{.bind}}~A'~A_0~\VThunk{\MForce{m~\code{.bind}}~A~A'~m~f}~g \\
      \equiv~&\MForce{m~\code{.bind}}~A~A_0~m~\VThunk{\MTmLam{x}{\MForce{m~\code{.bind}}~A'~A_0~\VThunk{\MForce{f}~x}~g}}\\
      \equiv~&\MForce{m~\code{.bind}}~A~A_0~m~\VThunk{\MTmLam{x}{\StrP{\BReturn{A_0}}~A'~\VThunk{\MForce{f}~x}~g}}\\
      \equiv~&\StrP{\BReturn{A_0}}~A~m~\VThunk{\MTmLam{x}{\StrP{\BReturn{A_0}}~A'~\VThunk{\MForce{f}~x}~g}}
    \end{align*}
  \item \textbf{Case} $\StrP{A_0 \to B}$ for $A_0: \VTy$ and $B: \CTy$: by induction hypothesis on $B$
    \begin{align*}
      &\StrP{A_0 \to B}~A'~\VThunk{\MForce{m~\code{.bind}}~A~A'~m~f}~g \\
      \equiv~&\MTmLam{z}{\StrP{B}~A'~\VThunk{\MForce{m~\code{.bind}}~A~A'~m~f}~\VThunk{\MTmLam{y}{\MForce{g}~y~z}}}\\
      \equiv~&\MTmLam{z}{\StrP{B}~A~m~\VThunk{\MTmLam{x}{\StrP{B}~A'~\VThunk{\MForce{f}~x}~\VThunk{\MTmLam{y}{\MForce{g}~y~z}}}}}\\
      \equiv~&\MTmLam{z}{\StrP{B}~A~m~\VThunk{\MTmLam{x}{(\MTmLam{w}{\StrP{B}~A'~\VThunk{\MForce{f}~x}~\VThunk{\MTmLam{y}{\MForce{g}~y~w}}})~z}}}\\
      \equiv~&\MTmLam{z}{\StrP{B}~A~m~\VThunk{\MTmLam{x}{\StrP{A_0 \to B}~A'~\VThunk{\MForce{f}~x}~g~z}}}\\
      \equiv~&\MTmLam{z}{\StrP{B}~A~m~\VThunk{\MTmLam{w}{\StrP{A_0 \to B}~A'~\VThunk{\MForce{f}~w}~g~z}}}\\
      \equiv~&\MTmLam{z}{\StrP{B}~A~m~\VThunk{\MTmLam{w}{(\MTmLam{x}{\StrP{A_0 \to B}~A'~\VThunk{\MForce{f}~x}~g})~w~z}}}\\
      \equiv~&\StrP{A_0 \to B}~A~m~\VThunk{\MTmLam{x}{\StrP{A_0 \to B}~A'~\VThunk{\MForce{f}~x}~g}}
    \end{align*}
  \item \textbf{Case} $\StrP{\BWith{B_d}{d \in D}}$ for $\multi{B_d: \CTy}_{d \in D}$: by induction hypothesis on $\multi{B_d}_{d \in D}$
    \begin{align*}
      &\StrP{\BWith{B_d}{d \in D}}~A'~\VThunk{\MForce{m~\code{.bind}}~A~A'~m~f}~g \\
      \equiv~&\MComatch{.d}
        {\StrP{B_d}~A'~\VThunk{\MForce{m~\code{.bind}}~A~A'~m~f}~\VThunk{\MTmLam{x}{\MForce{g}~x~.d}}}
      {d \in D} \\
      \equiv~&\MComatch{.d}
        {\StrP{B_d}~A~m~\VThunk{\MTmLam{x}{
          \StrP{B_d}~A'~\VThunk{\MForce{f}~x}~\VThunk{\MTmLam{x}{\MForce{g}~x~.d}}
        }}}
      {d \in D} \\
      \equiv~&\MComatch{.d}
        {\StrP{B_d}~A~m~\VThunk{\MTmLam{x}{\StrP{\BWith{B_d}{d \in D}}~A'~\VThunk{\MForce{f}~x}~g~.d}}}
      {d \in D} \\
      \equiv~&\StrP{\BWith{B_d}{d \in D}}~A~m~\VThunk{\MTmLam{x}{\StrP{\BWith{B_d}{d \in D}}~A'~\VThunk{\MForce{f}~x}~g}}
    \end{align*}
  \item \textbf{Case} $\StrP{\BForall{X:K}{B}}$ for all $K$ and $B: \CTy$: by induction hypothesis on $B$
    \begin{align*}
      &\StrP{\BForall{X:K}{B}}~A'~\VThunk{\MForce{m~\code{.bind}}~A~A'~m~f}~g \\
      \equiv~&\MTyLam{X}{\MTmLam{\code{str}_{\code{X}}}{\StrP{B}~A'~\VThunk{\MForce{m~\code{.bind}}~A~A'~m~f}~\VThunk{\MTmLam{z}{\MForce{g}~z~X~\code{str}_{\code{X}}}}}} \\
      \equiv~&\MTyLam{X}{\MTmLam{\code{str}_{\code{X}}}{\StrP{B}~A~m~\VThunk{\MTmLam{x}{
        \StrP{B}~A'~\VThunk{\MForce{f}~x}~\VThunk{\MTmLam{z}{\MForce{g}~z~X~\code{str}_{\code{X}}}}
      }}}} \\
      \equiv~&\MTyLam{X}{\MTmLam{\code{str}_{\code{X}}}{\StrP{B}~A~m~\VThunk{\MTmLam{x}{\StrP{\BForall{X:K}{B}}~A'~\VThunk{\MForce{f}~x}~g~X~\code{str}_{\code{X}}}}}} \\
      \equiv~&\StrP{\BForall{X:K}{B}}~A~m~\VThunk{\MTmLam{x}{\StrP{\BForall{X:K}{B}}~A'~\VThunk{\MForce{f}~x}~g}}
    \end{align*}
  \item \textbf{Case} $\StrP{\BNu{Y:\CTy}{B}}$ for $B: \CTy$: \newline
    let $B_\nu$ be $\BNu{Y:\CTy}{\floor{B}}$ \newline
    and $str_\nu$ be $\VThunk{\StrP{\BNu{Y:\CTy}{B}}}$ \newline
    by induction hypothesis on $B$
    \begin{align*}
      &\StrP{\BNu{Y:\CTy}{B}}~A'~\VThunk{\MForce{m~\code{.bind}}~A~A'~m~f}~g \\
      \equiv~&\MRoll{
        \subst
          {\subst{\StrP{B}}{B_\nu}{Y}}
          {str_\nu}{str_Y}
          ~A'~\VThunk{\MForce{m~\code{.bind}}~A~A'~m~f}~\VThunk{\MTmLam{x}{\MUnroll{\MForce{g}~x}}}
      } \\
      \equiv~&\MRoll{
        \subst
          {\subst{\StrP{B}}{B_\nu}{Y}}
          {str_\nu}{str_Y}
          ~A~m~\VThunk{\MTmLam{x}{
            \\ &\quad 
              \subst
                {\subst{\StrP{B}}{B_\nu}{Y}}
                {str_\nu}{str_Y}
                ~A'~\VThunk{\MForce{f}~x}~\VThunk{\MTmLam{y}{\MUnroll{\MForce{g}~y}}}
          }}
      } \\
      \equiv~&\MRoll{
        \subst
          {\subst{\StrP{B}}{B_\nu}{Y}}
          {str_\nu}{str_Y}
          ~A~m~\VThunk{\MTmLam{x}{\MUnroll{
            \\ &\quad 
            \MRoll{
              \subst
                {\subst{\StrP{B}}{B_\nu}{Y}}
                {str_\nu}{str_Y}
                ~A'~\VThunk{\MForce{f}~x}~\VThunk{\MTmLam{y}{\MUnroll{\MForce{g}~y}}}
            }
          }}}
      } \\
      \equiv~&\MRoll{
        \subst
          {\subst{\StrP{B}}{B_\nu}{Y}}
          {str_\nu}{str_Y}
          ~A~m~\VThunk{\MTmLam{x}{\MUnroll{\StrP{\BNu{Y:\CTy}{B}}~A'~\VThunk{\MForce{f}~x}~g}}}
      } \\
      \equiv~&\StrP{\BNu{Y:\CTy}{B}}~A~m~\VThunk{\MTmLam{x}{\StrP{\BNu{Y:\CTy}{B}}~A'~\VThunk{\MForce{f}~x}~g}}
    \end{align*}
  \item \textbf{Case} $\StrP{X}$ for $X: \CTy$: since the structure is
    from the value environment, it's lawful by the induction hypothesis.
  \item \textbf{Case} $\StrP{(\SLam{X:K}{B})~S}$ for $S: K$ and $B: \CTy$: by induction hypothesis on $S$ and $B$
    \begin{align*}
      &\StrP{(\SLam{X:K}{B})~S}~A'~\VThunk{\MForce{m~\code{.bind}}~A~A'~m~f}~g \\
      \equiv~&(\MTyLam{X}{\MTmLam{\code{str}_{\code{X}}}{\StrP{B}}})~\floor{S}~\VThunk{\StrP{S}}~A'~\VThunk{\MForce{m~\code{.bind}}~A~A'~m~f}~g \\
      \equiv~&\subst{\subst{\StrP{B}}{\floor{S}}{X}}{\VThunk{\StrP{S}}}{\code{str}_{\code{X}}}~A'~\VThunk{\MForce{m~\code{.bind}}~A~A'~m~f}~g \\
      \equiv~&\subst{\subst{\StrP{B}}{\floor{S}}{X}}{\VThunk{\StrP{S}}}{\code{str}_{\code{X}}}~A~m~\VThunk{\MTmLam{x}{\subst{\subst{\StrP{B}}{\floor{S}}{X}}{\VThunk{\StrP{S}}}{\code{str}_{\code{X}}}~A'~\VThunk{\MForce{f}~x}~g}} \\
      \equiv~&\subst{\subst{\StrP{B}}{\floor{S}}{X}}{\VThunk{\StrP{S}}}{\code{str}_{\code{X}}}~A~m~\VThunk{\MTmLam{x}{\StrP{(\SLam{X:K}{B})~S}~A'~\VThunk{\MForce{f}~x}~g}} \\
      \equiv~&\StrP{(\SLam{X:K}{B})~S}~A~m~\VThunk{\MTmLam{x}{\StrP{(\SLam{X:K}{B})~S}~A'~\VThunk{\MForce{f}~x}~g}}
    \end{align*}
\end{enumerate}

Finally, we prove the linear bind law:
\begin{enumerate}
  \item \textbf{Case} $\StrP{\BReturn{A_0}}$ for all $A_0: \VTy$: by applying the linear bind law of monad $T$
    \begin{align*}
      &\MBind{m}{\MForce{tm}}{\StrP{\BReturn{A_0}}~A~m} \\
      \equiv~&\MBind{m}{\MForce{tm}}{\MForce{m~\code{.bind}}~A~A_0~m} \\
      \equiv~&\MForce{m~\code{.bind}}~A~A_0~\VThunk{\MBind{m}{\MForce{tm}}{\MForce{m}}} \\
      \equiv~&\StrP{\BReturn{A_0}}~A~\VThunk{\MBind{m}{\MForce{tm}}{\MForce{m}}}
    \end{align*}
  \item \textbf{Case} $\StrP{A_0 \to B}$ for $A_0: \VTy$ and $B: \CTy$: by induction hypothesis on $B$
    \begin{align*}
      &\MBind{m}{\MForce{tm}}{\StrP{A_0 \to B}~A~m} \\
      \equiv~&\MBind{m}{\MForce{tm}}{\MTmLam{f~a}{\StrP{B}~A~m~\VThunk{\MTmLam{x}{\MForce{f}~x~a}}}} \\
      \equiv~&\MTmLam{f~a}{\MBind{m}{\MForce{tm}}{\StrP{B}~A~m}}~\VThunk{\MTmLam{x}{\MForce{f}~x~a}} \\
      \equiv~&\MTmLam{f~a}{\StrP{B}~A~\VThunk{\MBind{m}{\MForce{tm}}{\MForce{m}}}}~\VThunk{\MTmLam{x}{\MForce{f}~x~a}} \\
      \equiv~&\StrP{B}~A~\VThunk{\MBind{m}{\MForce{tm}}{\MForce{m}}}
    \end{align*}
  \item \textbf{Case} $\StrP{\BWith{B_d}{d \in D}}$ for $\multi{B_d: \CTy}_{d \in D}$: by induction hypothesis on $\multi{B_d}_{d \in D}$
    \begin{align*}
      &\MBind{m}{\MForce{tm}}{\StrP{\BWith{B_d}{d \in D}}~A~m} \\
      \equiv~&\MBind{m}{\MForce{tm}}\MTmLam{f}\MComatch{.d}
        {\StrP{B_d}~A~m~\VThunk{\MTmLam{x}{\MForce{f}~x~.d}}}
      {d \in D} \\
      \equiv~&\MTmLam{f}\MComatch{.d}
        {\MBind{m}{\MForce{tm}}{\StrP{B_d}~A~m~\VThunk{\MTmLam{x}{\MForce{f}~x~.d}}}}
      {d \in D} \\
      \equiv~&\MTmLam{f}\MComatch{.d}
        {\StrP{B_d}~A~\VThunk{\MBind{m}{\MForce{tm}}{\MForce{m}}}~\VThunk{\MTmLam{x}{\MForce{f}~x~.d}}}
      {d \in D} \\
      \equiv~&\StrP{\BWith{B_d}{d \in D}}~A~\VThunk{\MBind{m}{\MForce{tm}}{\MForce{m}}}
    \end{align*}
  \item \textbf{Case} $\StrP{\BForall{X:K}{B}}$ for all $K$ and $B: \CTy$: by induction hypothesis on $B$
    \begin{align*}
      &\MBind{m}{\MForce{tm}}{\StrP{\BForall{X:K}{B}}~A~m} \\
      \equiv~&\MBind{m}{\MForce{tm}}{\MTmLam{f}\MTyLam{X}{\MTmLam{\code{str}_{\code{X}}}{\StrP{B}~A~m~\VThunk{\MTmLam{x}{\MForce{f}~x~X~\code{str}_{\code{X}}}}}}} \\
      \equiv~&\MTmLam{f}\MTyLam{X}{\MTmLam{\code{str}_{\code{X}}}{\MBind{m}{\MForce{tm}}{\StrP{B}~A~m}}}~\VThunk{\MTmLam{x}{\MForce{f}~x~X~\code{str}_{\code{X}}}} \\
      \equiv~&\MTmLam{f}\MTyLam{X}{\MTmLam{\code{str}_{\code{X}}}{\StrP{B}~A~\VThunk{\MBind{m}{\MForce{tm}}{\MForce{m}}}}~\VThunk{\MTmLam{x}{\MForce{f}~x~X~\code{str}_{\code{X}}}}} \\
      \equiv~&\StrP{B}~A~\VThunk{\MBind{m}{\MForce{tm}}{\MForce{m}}}
    \end{align*}
  \item \textbf{Case} $\StrP{\BNu{Y:\CTy}{B}}$ for $B: \CTy$: \newline
    let $B_\nu$ be $\BNu{Y:\CTy}{\floor{B}}$ \newline
    and $str_\nu$ be $\VThunk{\StrP{\BNu{Y:\CTy}{B}}}$ \newline
    by induction hypothesis on $B$
    \begin{align*}
      &\MBind{m}{\MForce{tm}}{\StrP{\BNu{Y:\CTy}{B}}~A~m} \\
      \equiv~&\MBind{m}{\MForce{tm}}{\MTmLam{f}\MRoll{
        \subst
          {\subst{\StrP{B}}{B_\nu}{Y}}
          {str_\nu}{str_Y}
          ~A~m~\VThunk{\MTmLam{x}{\MUnroll{\MForce{f}~x~}}}
      }} \\
      \equiv~&\MTmLam{f}\MRoll{\MBind{m}{\MForce{tm}}{
        \subst
          {\subst{\StrP{B}}{B_\nu}{Y}}
          {str_\nu}{str_Y}
          ~A~m
          ~\VThunk{\MTmLam{x}{\MUnroll{\MForce{f}~x~}}}
      }} \\
      \equiv~&\MTmLam{f}\MRoll{
        \subst
          {\subst{\StrP{B}}{B_\nu}{Y}}
          {str_\nu}{str_Y}
          ~A~\VThunk{\MBind{m}{\MForce{tm}}{\MForce{m}}}
          ~\VThunk{\MTmLam{x}{\MUnroll{\MForce{f}~x~}}}
      } \\
      \equiv~&\StrP{\BNu{Y:\CTy}{B}}~A~\VThunk{\MBind{m}{\MForce{tm}}{\MForce{m}}}
    \end{align*}
  \item \textbf{Case} $\StrP{X}$ for $X: \CTy$: since the structure is
    from the value environment, it's lawful by the induction hypothesis.
  \item \textbf{Case} $\StrP{(\SLam{X:K}{B})~S}$ for $S: K$ and $B: \CTy$: by induction hypothesis on $S$ and $B$
    \begin{align*}
      &\MBind{m}{\MForce{tm}}{\StrP{(\SLam{X:K}{B})~S}~A~m} \\
      \equiv~&\MBind{m}{\MForce{tm}}{\subst{\subst{\StrP{B}}{\floor{S}}{X}}{\VThunk{\StrP{S}}}{\code{str}_{\code{X}}}~A~m} \\
      \equiv~&\subst{\subst{\StrP{B}}{\floor{S}}{X}}{\VThunk{\StrP{S}}}{\code{str}_{\code{X}}}~A~\VThunk{\MBind{m}{\MForce{tm}}{\MForce{m}}} \\
      \equiv~&\StrP{(\SLam{X:K}{B})~S}~A~\VThunk{\MBind{m}{\MForce{tm}}{\MForce{m}}}
    \end{align*}
\end{enumerate}
\end{proof}

\fi 

\end{document}